\title{Extending Coinductive Logic Programming with Co-Facts}
\author{Davide Ancona \qquad\qquad Francesco Dagnino \qquad\qquad Elena Zucca
\institute{DIBRIS, University of Genova}
\email{\{davide.ancona,elena.zucca\}@unige.it, fra.dagn@gmail.com}
}
\newcommand{\emptygoal}{\varepsilon} 
\newcommand{\dom}{\mbox{\sf dom}}
\newcommand{\goalsep}{\,\Box\,}
\newcommand{\Rule}[4]{\rn{#1}{\displaystyle\frac{#2}{#3}}\ {\scriptsize{#4}}}
\newcommand{\RuleNoName}[3]{{\displaystyle\frac{#1}{#2}}\ {\scriptsize{#3}}} 
\newcommand{\clause}[2]{#1 \mathrel{:\!\!-} #2}
\newcommand{\progcoax}[2]{{\langle}{#1},{#2}{\rangle}}
\newcommand{\cosemSep}[1]{\vdash_{#1}}
\newcommand{\semSep}[1]{\vdash_{#1}}
\newcommand{\cosem}[5]{#1 \cosemSep{#2} \langle {#3} \goalsep {#4} \rangle \Rightarrow #5}
\newcommand{\sem}[4]{\semSep{#1} \langle {#2} \goalsep {#3} \rangle \Rightarrow #4}
\newcommand{\cosemMain}[3]{\cosemSep{#1} {#2} \Rightarrow {#3} }
\newcommand{\rn}[1]{{\scriptsize \textrm{({#1})}}}
\newcommand{\UnifyEq}[2]{E^{{#1}={#2}}}
\newcommand{\Gsol}[1]{\textsf{gsol}({#1})}
\newcommand{\Var}[1]{\textsf{V}({#1})}
\newcommand{\CompSubst}[2]{{#1}{#2}}
\newcommand{\Inv}[1]{{#1}^{-1}}
\newtheorem{theorem}{Theorem}[section]
\newtheorem{lemma}{Lemma}[section]
\newtheorem{corollary}{Corollary}[section]
\newcommand{\refToFigure}[1]{Fig.~\ref{fig:#1}}
\newcommand{\refToSection}[1]{Sect.~\ref{sect:#1}}
\newcommand{\refToTheorem}[1]{Theorem~\ref{theo:#1}}
\newcommand{\refToCorollary}[1]{Corollary~\ref{cor:#1}}
\newcommand{\refToLemma}[1]{Lemma~\ref{lemma:#1}}
\newcommand{\refToApp}[1]{Appendix~\ref{sect:#1}}
\newcommand{\Space}{\hskip 0.7em}
\newcommand{\fun}[3]{#1:#2\rightarrow#3}
\newcommand{\ApplySubst}[2]{#1#2}
\newcommand{\HU}{\textsf{HU}}
\newcommand{\HB}{\textsf{HB}}
\newcommand{\coHU}{\textsf{co-}\HU}
\newcommand{\coHB}{\textsf{co-}\HB}
\newcommand{\Ind}[1]{\textit{Ind}(#1)}
\newcommand{\CoInd}[1]{\textit{CoInd}(#1)}
\newcommand{\Generated}[2]{\textit{Gen}(#1,#2)}
\newcommand{\prog}{\mathit{P}}
\newcommand{\Ground}[1]{\textsf{ground}(#1)}
\newcommand{\coGround}[1]{\textsf{co-ground}(#1)}
\newcommand{\cofacts}{\mathit{C}}
\newcommand{\Extended}[2]{{#1_{{\sqcup}#2}}}
\newcommand{\cohyp}{\mathit{S}}
\newcommand{\Goal}{\mathit{G}}
\newcommand{\Op}[1]{{\textit{T}_{#1}}}
\newcommand{\GoalMax}{\Goal^{\texttt{\footnotesize max}}}
\newcommand{\OpSem}[2]{\mathit{OpSem}(#1,#2)}
\newcommand{\Agree}[2]{#1\vert#2}
\newcommand{\lstmath}[1]{\mbox{\lstinline{#1}}}
\newcommand{\der}{\nabla}
\newcommand{\blindfrac}[1]{\begin{array}{c}\\ {#1}\end{array}}
\newif\ifsubmit
\newcommand{\EZ}[1]{{#1}} 
\newcommand{\FD}[1]{{#1}} 
\newcommand{\DA}[1]{{#1}} 
\newcommand{\EZComm}[1]{} 
\newcommand{\FDComm}[1]{} 
\newcommand{\DAComm}[1]{} 
\newcommand{\EZ}[1]{\textcolor{blue}{#1}} 
\newcommand{\FD}[1]{\textcolor{red}{#1}} 
\newcommand{\DA}[1]{\textcolor{magenta}{#1}} 
\newcommand{\EZComm}[1]{{\scriptsize\textcolor{blue}{[\bf{Elena: }#1}]}}
\newcommand{\FDComm}[1]{{\scriptsize\textcolor{red}{[\bf{Francesco: }#1}]}}
\newcommand{\DAComm}[1]{{\scriptsize\textcolor{magenta}{[\bf{Davide: }#1}]}}
\begin{document}
\maketitle

\begin{abstract}
We introduce a generalized logic programming paradigm where programs, consisting of facts and rules with the usual syntax, can be enriched by \emph{co-facts}, which syntactically resemble facts but have a special meaning. As in coinductive logic programming, interpretations are subsets of the complete Herbrand basis, including infinite terms. However, the intended meaning (declarative semantics) of a program is a fixed point which is not necessarily the least, nor the greatest one, but is determined by co-facts. In this way, it is possible to express predicates on non well-founded structures, such as infinite lists and graphs, for which the coinductive interpretation would be not precise enough. Moreover, this paradigm nicely subsumes standard (inductive) and coinductive  logic programming, since both can be expressed by a particular choice of co-facts, hence inductive and coinductive predicates can coexist in the same program. We illustrate the paradigm by examples, and provide declarative and operational semantics, proving the correctness of the latter. Finally, we describe a prototype meta-interpreter.
\end{abstract}

\section{Introduction} \label{sect:intro}
Coinductive logic programming \cite{Simon06,SimonEtAl06,SimonEtAl07,AnconaDovier15} extends standard logic programming with the ability of reasoning
about infinite objects and their properties. Whereas syntax of logic programs remains the same, semantics is different. To illustrate this, let us consider the following logic program which defines some predicates on  lists of integers, constructed with the standard function symbols \lstinline{[]} of arity 0 for the empty list and \lstinline{[$\_$|$\_$]} of arity 2 for the list consisting of a head element and a tail. For simplicity, we will consider built-in integers, as they are in Prolog. 

\begin{lstlisting}
all_pos([]).
all_pos([N|L]) :- N>0, all_pos(L).

member(X,[X|_]).
member(X,[Y|L]):-X\=Y, member(X,L).

max([N],N).
max([N|L],M2) :- max(L,M), M2 is max(N,M).
\end{lstlisting}

The expected meaning is that \lstinline{all_pos($l$)} holds if all the elements of $l$ are positive, \lstinline{member($x$,$l$)} if $x$ is an element of $l$,   \lstinline{max($l$,$n$)} if $n$ is the greatest element of list $l$.
As will be illustrated in detail in \refToSection{cofacts}, in standard logic programming terms are inductively defined, that is, are finite, and predicates are inductively defined as well. In the example program, only finite lists are considered, such as, e.g., \lstinline{[1|[2|[]]]}, and the three predicates are correctly defined on such lists. 

In  coinductive logic programming, instead,  terms are coinductively defined, that is, can be infinite, and predicates are coinductively defined as well. In the example program, also infinite lists such as \lstinline{[1|[2|[3|[4|...]]]]}, are considered, and the coinductive interpretation of the predicate \EZ{\lstinline{all_pos}} gives the expected meaning on such lists. However, this is not the case for the other two predicates: for \lstinline{member} the correct interpretation is the inductive one\EZComm{Davide avevi fatto da qualche parte un bel discorso sulle propriet\`a di safety e liveness, pensi ci stia bene?},  whereas for \lstinline{max} neither the inductive nor the coinductive interpretation are correct: with the former the predicate is always false on infinite lists, with the latter \lstinline{max($l$,$n$)} is true whenever $n$ is greater than all the elements of $l$. 

The last example shows that the coinductive interpretation of predicates is sometimes \emph{not precise enough}, in the sense that also wrong facts are included. This problem can be found, and has been studied, also in other programming paradigms, and some solutions have been proposed which allow the programmer
to interpret corecursive definitions not in the standard coinductive way \cite{JeanninEtAl12,JeanninEtAl13,Ancona13,AnconaZucca12,AnconaZucca13}.

In this paper, we solve the problem in a more foundational way, by applying to the case of logic programs a notion recently introduced in the more general framework of inference systems \cite{AnconaEtAl17} (indeed, a logic program can be seen as an inference system where judgments are atoms). That is, programs, consisting of facts and rules with the usual syntax, can be enriched by \emph{co-facts} (corresponding to \emph{coaxioms} in \cite{AnconaEtAl17}), which syntactically resemble facts but have a special meaning: intuitively, they can only be applied ``at infinite depth'' in a proof tree, as will be formally defined in the following. By adding co-facts, the intended meaning (declarative semantics) of a program can be a fixed point which is not necessarily the least, nor the greatest one. 

In this way, it is possible to express predicates on non well-founded structures, such as infinite lists and graphs, for which the coinductive interpretation would be not precise enough. Moreover, this paradigm nicely subsumes standard (inductive) and coinductive  logic programming, since both can be expressed by a particular choice of co-facts, hence inductive and coinductive predicates can coexist in the same program. 

For what concerns operational semantics, in coinductive logic programming standard SLD resolution is replaced by co-SLD resolution \cite{SimonEtAl06,AnconaDovier15}, which, roughly speaking, keeps trace of the already encountered goals, called \emph{(dynamic)  coinductive hypotheses}, so that, when a goal is encountered the second time, it is considered successful. In this paper, we define an operational semantics of logic programs which is similar to co-SLD resolution, but takes co-facts into account, and prove its correctness with respect to the proposed declarative semantics. The proof is interesting since it is a \EZ{non-trivial} application of the \emph{bounded coinduction principle} introduced in \cite{AnconaEtAl17}. 

The operational semantics we define is in big-step style, as also proposed for co-SLD resolution \cite{AnconaDovier15}, and, hence, is amenable
for directly deriving an implementation; indeed, we have implemented in SWI-Prolog a prototype meta-interpreter, whose clauses are driven by
our operational semantics, and with which we have been able to successfully test all examples shown in this paper, and many others.  

The rest of the paper is organized as follows: in \refToSection{cofacts} we introduce logic programs with co-facts, their declarative semantics, and the bounded coinduction principle to reason on such programs, 
illustrating the notions with some examples. In \refToSection{big-step} we formally define operational semantics, show a derivation example, and prove soundness with respect to declarative semantics. In \refToSection{impl} we describe the prototype meta-interpreter. In  \refToSection{conclu} we summarize our contribution, survey related work, and discuss further work.

\section{Co-facts}\label{sect:cofacts}
We recall some notions about standard \cite{Lloyd87,Apt97} and coinductive \cite{Simon06,SimonEtAl06,SimonEtAl07} logic programming.

Assume  a \emph{signature} consisting of sets of \emph{predicate symbols} $p$, \emph{function symbols} $f$, and \emph{variable symbols} $X$, each one with an associated \emph{arity} $\geq 0$, being $0$ for variable symbols. A function of arity $0$ is called a \emph{constant}. 

\emph{Terms} are (possibly infinite) trees where nodes are labeled with function and variable symbols and the number of children of a node corresponds to the symbol arity (for a more formal definition based on paths see, e.g., \cite{AnconaDovier15}). \emph{Atoms} are (possibly infinite) trees where the root is labeled with a predicate symbol and other nodes are labeled with function and variable symbols, also accordingly with the arity. Terms and atoms are \emph{ground} if they do not contain variable symbols, \emph{finite} (or \emph{syntactic}) if they are finite trees.


A \emph{logic program} $\prog$ is a set of \emph{(definite) clauses} of shape $\clause{A}{B_1,\dots,B_n}$, where $A, B_1, \ldots, B_n$ are finite atoms. A clause where $n=0$ is called a \emph{fact}, otherwise it is called a \emph{rule}.

A \emph{substitution} $\theta$ is a mapping from a finite subset of variables into terms. We write $\ApplySubst{t}{\theta}$ for the application of a substitution $\theta$ to a term $t$. We call $\ApplySubst{t}{\theta}$ an \emph{instance} of $t$. These notions can be analogously defined on atoms and clauses. A substitution is \emph{ground} (or \emph{grounding}) \EZ{if} it maps variables into ground terms, it is \emph{syntactic} if it maps variables into finite (syntactic) terms. 
The \emph{declarative semantics} of a logic program describes its meaning in an abstract way, as the set of ground atoms which are defined to be true by the program, in a sense to be made precise depending on the kind of declarative semantics we choose.
In the following \EZ{paragraphs,} we briefly recall the standard declarative semantics of logic programs, then their coinductive declarative semantics, and finally we define the declarative semantics \emph{generated by co-facts} and discuss its advantages. 

\paragraph{Standard declarative semantics} In the standard declarative semantics of logic programs, only finite terms and atoms are considered. 
The \emph{Herbrand universe} $\HU$ is defined as the set of finite ground terms, and the \emph{Herbrand
base} $\HB$ as the set of finite ground atoms. Sets  $I \subseteq \HB$ are called \emph{interpretations}. 
Given a logic program $\prog$, the (one step) inference operator $\fun{\Op{\prog}}{\wp(\HB)}{\wp(\HB)}$ is defined as follows:

$$\Op{\prog}(I)  = \{A \mid (\clause{A}{B_1,\dots,B_n}) \in \Ground{\prog}, \{B_1,\dots,B_n\} \subseteq I\}$$

\noindent where $\Ground{\prog}$ is the set of instances of clauses in $\prog$ obtained by a ground syntactic substitution.

An interpretation is a \emph{model} of a program $\prog$ (is \emph{closed} with respect to $\prog$) if $\Op{\prog}(I) \subseteq I$. The standard declarative semantics of $\prog$ is the least interpretation which is a model taking as order set inclusion, that is, the intersection of all closed interpretations. Defining a \emph{proof tree} for a ground atom $A$ as a tree where the root is $A$, nodes are ground instances of rules, and leaves are ground instances of facts, the standard declarative semantics can be equivalently characterized as the set of finite ground atoms which have a finite proof tree. 


It is easy to see that, with this definition, the predicates of the example program introduced in \refToSection{intro} have the expected meaning on finite lists. For instance, for the predicate \lstinline{max} we obtain all atoms \lstinline{max($l$,$n$)} where $l$ is a (ground term representing a) finite list and $n$ is the greatest element of $l$.

 \paragraph{Co-inductive declarative semantics}
A limit of the standard declarative semantics described above is that we cannot define predicates on non-well-founded structures, such as infinite lists or graphs.  
Considering our running example, we would like to define the predicates \lstinline{all_pos}, \lstinline{member}, and \lstinline{max} on infinite lists as well. 
To obtain this, first of all infinite terms and atoms should be included. 
The \emph{complete Herbrand Universe} $\coHU$~\cite{Lloyd87}
is the set of (finite and infinite) ground terms.  
The \emph{complete Herbrand base} $\coHB$ is the set of (finite and infinite) ground atoms.
Sets  $I \subseteq \coHB$ are called
\emph{co-interpretations}.  

We can define the (one step) inference operator $\fun{\Op{\prog}}{\wp(\coHB)}{\wp(\coHB)}$ analogously to that above:
$$\Op{\prog}(I)  = \{A \mid (\clause{A}{B_1,\dots,B_n}) \in \coGround{\prog}, \{B_1,\dots,B_n\} \subseteq I\}$$
where $\coGround{\prog}$ is the set of instances of clauses in $\prog$ obtained by a ground substitution.\footnote{\EZ{An instance of a clause in $\prog$ obtained by mapping some variables into infinite ground terms belongs to $\coGround{\prog}$, but does not belong to $\Ground{\prog}$.}}

Again as above, a co-interpretation is a \emph{model} of a program $\prog$ (is \emph{closed} with respect to $\prog$) if $\Op{\prog}(I) \subseteq I$, and we can consider the the least co-interpretation which is a model, that is, the intersection of all closed co-interpretations. This co-interpretation, called in the following \emph{inductive (declarative) semantics} of $\prog$ and denoted $\Ind{\prog}$, generalizes the standard declarative semantics of predicates to infinite terms. However, in order to prove predicates on such infinite terms, we would like to allow infinite proof trees as well, and to this end a different declarative semantics can be adopted, explained below. 

A co-interpretation \EZ{$I$} is a \emph{co-model} of $\prog$ (is \emph{consistent} with respect to $\prog$) if and only
if $I \subseteq \Op{\prog}(I)$. The \emph{coinductive (declarative) semantics} of $\prog$, denoted $\CoInd{\prog}$, is the greatest co-interpretation which is a co-model taking as order set inclusion, that is, the union of all consistent co-interpretations. Equivalently, $\CoInd{\prog}$ can be  characterized as the set of ground atoms which have a (finite or infinite) proof tree. 

Let us analyze what happens in the running example. Consider first the predicate \lstinline{all_pos}. It is easy to see that, with the inductive semantics $\Ind{\prog}$, this predicate has the expected meaning only on finite lists (as in the standard semantics) and on infinite lists which have a non positive element. However, for $l$ term representing an infinite list of positives, e.g., $l=$\lstinline{[1|[2|[3|[4|...]]]]}, the atom \lstinline{all_pos($l$)} cannot be proved by a finite proof tree, hence the predicate turns out to be false. With the coinductive semantics, instead, we can prove \lstinline{all_pos($l$)} by the infinite proof tree shown below:

\begin{small}
$$\RuleNoName{\RuleNoName{\RuleNoName{\vdots}{\texttt{all_pos([3|[4|...]])}}{}}{\texttt{all_pos([2|[3|[4|...]]])}}{}}{\texttt{all_pos([1|[2|[3|[4|...]]]])}}{}$$
\end{small}

Hence, this predicate is a typical example where coinductive semantics is necessary (when including infinite terms), and provides the expected meaning.
However, this is not the case for the other two predicates. More precisely, for $l$ (term representing an) infinite list:
\begin{itemize}
\item \lstinline{member($x$,$l$)} always holds. In this case, the coinductive semantics $\CoInd{\prog}$ seems not to be the right choice, and the desired semantics is obtained by taking the inductive semantics (least model) on infinite lists as well, that is, $\Ind{\prog}$. For this reason, in coinductive logic programming the programmer can specify by a special notation that some predicates should be interpreted inductively rather than coinductively. 
\item \lstinline{max($l$,$n$)} holds whenever $n$ is greater than all the elements of $l$. In this case, the coinductive semantics \emph{includes} the desired semantics (and it is necessary for this, since \lstinline{max($l$,$n$)} would never hold in the inductive semantics), but it is not precise enough. \EZ{Indeed, for an infinite (regular) list $l=$\texttt{[1|[2|[1|[2|...]]]]}, we can prove \texttt{max($l$,2)} as expected, but we can also prove, e.g., \texttt{max($l$,4)}, as shown by the infinite proof trees (T1) and (T2), respectively, shown below:}\label{T1-T2}
\EZ{\begin{small}
$$\Rule{T1}{\RuleNoName{\RuleNoName{\vdots}{\texttt{max([1|[2|[1|...]]],2)}}{}}{\texttt{max([2|[1|[2|...]]],2)}}{}}{\texttt{max([1|[2|[1|...]]],2)}}{}
\Space
\Rule{T2}{\RuleNoName{\RuleNoName{\vdots}{\texttt{max([1|[2|[1|...]]],4)}}{}}{\texttt{max([2|[1|[2|...]]],4)}}{}}{\texttt{max([1|[2|[1|...]]],4)}}{}$$
\end{small}}
\end{itemize}

 \paragraph{Declarative semantics generated by co-facts}
 In order to define this semantics, first of all the syntax is slightly generalized allowing, besides facts and rules, \emph{co-facts}, written \lstinline{.A}.
 Co-facts are finite atoms, hence syntactically resemble facts, but have a special meaning.\EZComm{dire qui o nelle conclu che si generalizza in modo ovvio alle regole, in effetti pensandoci non si vede il perch\'e della restrizione - anche per gli inference system}
\EZComm{scrivo per ora i co-fatti mettendoci un punto prima invece che dopo}
Below is the version equipped with co-facts of the running example.

\begin{lstlisting}
all_pos([]).
all_pos([N|L]) :- N>0, all_pos(L).
.all_pos(_)

member(X,[X|_]).
member(X,[Y|L]) :- X\=Y, member(X,L).

max([N],N).
max([N|L],M2) :- max(L,M), M2 is max(N,M).
.max([N|_],N)
\end{lstlisting}
\noindent In the following, the metavariable $\cofacts$ denotes a set of co-facts (finite atoms).

 The \emph{(declarative) semantics} of a program $\prog$ \emph{generated by co-facts $\cofacts$}, denoted $\Generated{\prog}{\cofacts}$, is defined as follows.
\begin{itemize}
\item First, we consider the program ${\Extended{\prog}{\cofacts}}$ \EZComm{forse metterci proprio l'unione?} obtained by enriching $\prog$ by the co-facts in $\cofacts$ considered as facts, and we take its inductive semantics $\Ind{\Extended{\prog}{\cofacts}}$.
\item Then, we take the largest co-model which is included in $\Ind{\Extended{\prog}{\cofacts}}$. \EZ{In other words, we take the coinductive interpretation of $\prog$ where, however, clauses are instantiated only on elements of $\Ind{\Extended{\prog}{\cofacts}}$.}
\end{itemize}
\EZ{Note that $\Generated{\prog}{\cofacts}$ is different from $\CoInd{\prog}\cap\Ind{\Extended{\prog}{\cofacts}}$. For instance, let $\prog$ be the program}
\begin{lstlisting}
p(0) :- p(0), p(1)
p(1) :- p(0), p(1)
\end{lstlisting}
\EZ{and $\cofacts$ be the singleton set consisting of the co-fact}
\begin{lstlisting}
.(p(0))
\end{lstlisting}
\EZ{Then, $\CoInd{\prog}=\{\texttt{p(0)}, \texttt{p(1)}\}$, and $\Ind{\Extended{\prog}{\cofacts}}=\{\texttt{p(0)}\}$. Hence the intersection is $\{\texttt{p(0)}\}$, whereas $\Generated{\prog}{\cofacts}=\emptyset$.}

As we have shown in \cite{AnconaEtAl17} in the more general framework of inference systems, $\Generated{\prog}{\cofacts}$ corresponds to a fixed-point of the operator $\Op{\prog}$ which is neither the greatest, nor the least one.

In terms of proof trees, $\Generated{\prog}{\cofacts}$ is the set of ground atoms which have a (finite or infinite) proof tree in $\prog$ whose nodes all have a finite proof tree in ${\Extended{\prog}{\cofacts}}$. 
Taking this semantics, all the predicates in the running example get the expected meaning. Indeed, for $l$ (term representing an) infinite list:
\begin{itemize}
\item \lstinline{all_pos($l$)} holds for $l$  infinite list of positives, since the atom \lstinline{all_pos($l$)} has the previously shown infinite proof tree, and all its nodes have a (trivial) finite proof tree in ${\Extended{\prog}{\cofacts}}$, consisting in an instantiation of the co-fact.
\item \lstinline{member($x$,$l$)} only holds if $x$ belongs to $l$. Otherwise, there would exist an infinite proof tree in  $\prog$, but its nodes have 
no finite proof tree in  ${\Extended{\prog}{\cofacts}}$, since there are no co-facts, and the only fact is not applicable.
\item \lstinline{max($l$,$n$)} only holds when $n$ is the greatest element of $l$. In this case, as shown \EZ{at page \pageref{T1-T2}}, there exists an infinite proof tree in $\prog$ whenever 
$n$ is greater than all the elements of $l$. However, a finite proof tree in ${\Extended{\prog}{\cofacts}}$ for each node only exists when 
$n$ belongs to the list, hence is actually the greatest element. For instance, the two nodes of the infinite proof tree \EZ{(T1)} for \texttt{max([1|[2|[1|[2|...]]]],2)} have the finite proof trees \EZ{(FT1)} and \EZ{(FT2)} shown below:
\begin{small}
$$\Rule{\EZ{FT1}}{\RuleNoName{}{\texttt{max([2|[1|[2|...]]],2)}}{}}{\texttt{max([1|[2|[1...]]],2)}}{}\Space\Rule{\EZ{FT2}}{}{\texttt{max([2|[1|[2|...]]],2)}}{}$$
\end{small}
\end{itemize}
\EZ{whereas there is no finite proof tree for the nodes of the infinite proof tree (T2).}
 In other words, co-facts allow the programmer to ``filter out'' atoms which should not be true, making the semantics precise.

\EZComm{se il discorso sotto vi sembra prolisso possiamo toglierlo, il senso era di anticipare un po' quello che succede nella sem. operazionale}
  Note that the condition to have a finite proof tree in ${\Extended{\prog}{\cofacts}}$ trivially holds for nodes which are roots of a finite subtree (a finite proof tree in $\prog$ is a finite proof tree in ${\Extended{\prog}{\cofacts}}$ as well), hence is only significant for nodes which occur in an infinite path. Moreover, if the infinite tree is \emph{rational}, that is, has a finite number of different subtrees, then an infinite path always consists of (possibly) a finite prefix and a period. Hence, if the condition holds for the first node of the period, then it holds for all the other nodes. In the example \EZ{of (T1) above, another finite proof tree for the second node \texttt{max([2|[1|[2|...]]],2)} can be obtained from (FT1) for the first node:
\begin{small}
$$\RuleNoName{\RuleNoName{\RuleNoName{}{\texttt{max([2|[1|[2|...]]],2)}}{}}{\texttt{max([1|[2|[1|...]]],2)}}{}}{\texttt{max([2|[1|[2|...]]],2)}}{}$$
\end{small}}

Correspondingly, in the operational semantics which will be provided in \refToSection{big-step}, standard SLD resolution in ${\Extended{\prog}{\cofacts}}$ is triggered when an atom is encountered the second time.

\EZComm{forse aggiungere come esempio anche \lstinline{noRep}}

\EZComm{forse dare nomi a fatti regole e coassiomi per mostrarli nell'albero di prova?}

Note also that, as the examples above clearly show, the inductive and coinductive semantics can be obtained as special cases of the semantics generated by co-facts of a program, notably:
\begin{itemize}
\item the inductive semantics when the set of co-facts is empty;
\item the coinductive semantics when the set of (ground instances of) co-facts is $\coHB$.
\end{itemize} 
That is, co-facts allow to mix together, without any need of a special notation, predicates for which the appropriate interpretation is either inductive or coinductive, and to express predicates for which the appropriate interpretation is neither of the two, as the \lstinline{max} example shows.

Let $\Goal$ be a set of ground atoms, corresponding to the intended meaning of some predicates.

In order to prove that the the atoms in $\Goal$ are defined to be true by a program $\prog$ enriched by co-facts $\cofacts$, that is,  $\Goal\subseteq\Generated{\prog}{\cofacts}$, we can use the following \emph{bounded coinduction principle}, which is a generalization of the standard coinduction principle. This and other proof techniques are illustrated in the more general framework of  inference systems in \cite{AnconaEtAl17}. 

\begin{theorem}[Bounded coinduction principle]\label{theo:bcoind}If the following two conditions hold:
\begin{description}
\item[Boundedness] $\Goal \subseteq \Ind{\Extended{\prog}{\cofacts}}$, that is, each atom in $\Goal$ has a finite proof tree in ${\Extended{\prog}{\cofacts}}$
\item [Consistency] $\Goal \subseteq \Op{\prog}(\Goal)$, that is, for each atom $A\in \Goal$,\\
$\clause{A}{B_1,\dots,B_n} \in\coGround{\prog}$ for some $B_1,\ldots,B_n\in\Goal$
\end{description}
then $\Goal\subseteq\Generated{\prog}{\cofacts}$.
\end{theorem}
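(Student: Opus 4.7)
The plan is to apply the definition of $\Generated{\prog}{\cofacts}$ essentially directly, taking $\Goal$ itself as the witnessing co-model. Recall that $\Generated{\prog}{\cofacts}$ is the largest co-interpretation $I$ satisfying both $I \subseteq \Op{\prog}(I)$ (consistency with $\prog$) and $I \subseteq \Ind{\Extended{\prog}{\cofacts}}$. Since this greatest element is obtained as the union of the family of all such $I$, it suffices to show that $\Goal$ belongs to that family; monotonicity of $\Op{\prog}$ guarantees that the union is itself a co-model, so the largest element is indeed well defined.

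Concretely, I would argue as follows. The Boundedness hypothesis is exactly the condition $\Goal \subseteq \Ind{\Extended{\prog}{\cofacts}}$, and the Consistency hypothesis is exactly $\Goal \subseteq \Op{\prog}(\Goal)$, i.e.\ $\Goal$ is a co-model of $\prog$. Therefore $\Goal$ lies in the family whose union is $\Generated{\prog}{\cofacts}$, which immediately yields $\Goal \subseteq \Generated{\prog}{\cofacts}$.

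The only non-trivial point, and the place I would spend a sentence or two, is checking that taking the largest co-model contained in $\Ind{\Extended{\prog}{\cofacts}}$ is legitimate: given any family $\{I_k\}_k$ with $I_k \subseteq \Op{\prog}(I_k)$ and $I_k \subseteq \Ind{\Extended{\prog}{\cofacts}}$, the union $\bigcup_k I_k$ still satisfies both properties. Containment in $\Ind{\Extended{\prog}{\cofacts}}$ is obvious, and for consistency one uses that $\Op{\prog}$ is monotone, so $I_k \subseteq \Op{\prog}(I_k) \subseteq \Op{\prog}(\bigcup_k I_k)$ for every $k$, and hence $\bigcup_k I_k \subseteq \Op{\prog}(\bigcup_k I_k)$.

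There is no real obstacle here; the content of the theorem is entirely in the definition of $\Generated{\prog}{\cofacts}$ as ``largest co-model bounded by $\Ind{\Extended{\prog}{\cofacts}}$''. The value of the principle is methodological, since it tells the user that to prove $\Goal \subseteq \Generated{\prog}{\cofacts}$ one need not exhibit proof trees at all: it suffices to (i) verify finite derivability in the enriched program (boundedness) and (ii) close $\Goal$ under one step of $\Op{\prog}$ (consistency). The proof will therefore be a short paragraph, essentially unpacking the definition and invoking monotonicity of $\Op{\prog}$.
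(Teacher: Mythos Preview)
Your proposal is correct and matches the paper's own proof, which is a one-sentence observation that the two hypotheses say precisely that $\Goal$ is a co-model of $\prog$ bounded by $\Ind{\Extended{\prog}{\cofacts}}$, and $\Generated{\prog}{\cofacts}$ is by definition the largest such co-model. Your additional remark on monotonicity of $\Op{\prog}$ to justify existence of this largest element is a welcome clarification the paper leaves implicit.
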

\begin{proof}
The two conditions corresponds to require that $\Goal$ is a co-model of $\prog$ (is \emph{consistent} with respect to $\prog$) which is \emph{bounded} by (included in) $\Ind{\Extended{\prog}{\cofacts}}$, and $\Generated{\prog}{\cofacts}$ is defined as the largest such co-model.
\end{proof}

The standard coinduction principle can be obtained as a specific instance of the principle above, when (ground instances of) $\cofacts$ coincide with $\coHB$; for this particular case the first condition trivially holds.

We illustrate the proof technique by formally proving that, in the running example, we can derive all atoms in the set $\GoalMax=\{{\texttt{max}(n,l)}\mid n\ \mbox{greatest element of}\ l\}$.
\begin{itemize}
\item To prove boundedness, we have to show that, for each \lstinline{max($l$,$n$)} such that $n$ \EZ{is the} greatest element of $l$, \lstinline{max($l$,$n$)} has a \emph{finite} proof tree in $\Extended{\prog}{\cofacts}$.
This can be easily shown. Indeed, if $n$ is the greatest element of $l$, then $l=[n_1| [ \ldots [n_k | l'] ] ]$ with $n_k=n$, $n_i\leq n$ for $i\in[1..k-1]$. Hence, \lstinline{max($l$,$n$)} has a finite proof tree in $\Extended{\prog}{\cofacts}$ which consists in $k-1$ nodes which are instances of the rule and a leaf which is an instance of the co-fact (a concrete example for \lstinline{max([2|[1|[2|1|...]]],2)} has been shown before). 
 \item To prove consistency, we have to show that, for each  \lstinline{max($l$,$n$)} such that $n$ greatest element of $l$, \lstinline{max($l$,$n$)} is the consequence of (an instance of) a clause with all atoms of the body in $\GoalMax$. This can be easily shown, indeed, if $n$ is the greatest element of $l$, then $l=[n' | l']$ with $n'\leq n$ and $n$ greatest element of $l'$. Hence, \lstinline{max($l$,$n$)} is the consequence of the following instance of the rule: \lstinline{max([$n'$|$l'$],$n$) :- max($l'$,$n$), $n$ is max($n'$,$n$)} where the atom \lstinline{max($l'$,$n$)} belongs to $\GoalMax$. 
\end{itemize}

The same proof technique will be used in the next section to show that the operational semantics is sound with respect to the semantics generated by co-facts.

\section{Big-step operational semantics}\label{sect:big-step}

In this section we define an operational counterpart of the semantics generated by co-facts introduced in \refToSection{cofacts}, and prove its correctness. \EZComm{dire che non pu\`o essere completa, inserire riferimenti su questo, magari mettere un esempio di questo?}

This operational semantics is a generalization of SLD \cite{Lloyd87,Apt97} and co-SLD \cite{Simon06,SimonEtAl06,SimonEtAl07} resolution. 
\EZComm{forse introdurre un nome, tipo co-facts driven SLD resolution?}
However, it is presented, rather than in the traditional small-step style, in big-step style, as introduced in \cite{AnconaDovier15}.
This style turns out to be simpler since coinductive hypotheses  (see below) can be kept local. Moreover, it naturally leads to an interpreter, and \EZ{the} proof of soundness with respect to declarative semantics is more direct since we compare two inference systems. 
For a proof of equivalence of big-step and small-step co-SLD resolution see \cite{AnconaDovier15}.

We introduce some notations. First of all, in this section the metavariable $A$ denotes finite (syntactic) atoms and the metavariables $s, t$ denote finite (syntactic) terms. 
An \emph{equation} has shape $s=t$ where $s$ and $t$ are finite terms. We write $E$ for a set of equations.\EZComm{non mi \`e chiaro se \`e veramente necessario parlare di solved form}
Equations allow a finite (syntactic) representation \FD{also} of \emph{rational} terms and atoms, that is, \FD{having} a finite number of different subterms, see, e.g., \cite{AdamekEtAl06, AnconaDovier15} for the details.
Finally, a \emph{goal} is a sequence of atoms, and the metavariable $G$ denotes a syntactic goal, that is, a sequence of finite atoms. The empty sequence is denoted by $\epsilon$. 

The operational semantics is defined by a judgment $\cosemMain{\progcoax{\prog}{\cofacts}} {G} {E}$, meaning that, given a program $\prog$ and a set of co-facts $\cofacts$, resolution of the (syntactic) goal $G$ succeeds \FD{and produces} a set of equations $E$ \FD{that} describes a solution of the goal.  For instance, assuming $\prog$ and $\cofacts$ from the previous section, which define, among others, the predicate \lstinline{max} computing the greatest element of a list, \FD{it would be possible to derive} the following judgment
\begin{quote}
$\cosemMain{\progcoax{\prog}{\cofacts}} {\texttt{max(L,2)}} {\{\texttt{L=[1|L1], L1=[2|L]]\}}}$
\end{quote}
meaning that a solution of the goal  \lstinline{max(L,2)} is the infinite term \lstinline{l = [1|[2|[1|2|  ... ]]]}. 

This judgment is defined following a schema which is similar to co-SLD resolution, in the sense that resolution \FD{keeps track} of the already encountered atoms, which are called \emph{(dynamic)  coinductive hypotheses} \cite{SimonEtAl06}. However, when the same atom $A$ is encountered the second time, rather than just considering $A$ successful as it happens in co-SLD resolution, standard SLD resolution of $A$ is triggered in the program ${\Extended{\prog}{\cofacts}}$ obtained by enriching $\prog$ by the co-facts in $\cofacts$.

Formally, two auxiliary judgments are introduced:
\begin{itemize}
\item $\cosem{\cohyp}{\progcoax{\prog}{\cofacts}}{G}{E}{E'}$, meaning that, given a program $\prog$ and a set of co-facts $\cofacts$, resolution of the goal represented by $G$ and $E$, under the \emph{coinductive hypotheses} $\cohyp$, succeeds producing a set of equations $E'$ which describes a solution of the goal.
\item $\sem{\prog}{G}{E}{E'}$ meaning that, given a program $\prog$, standard SLD resolution of the goal represented by $G$ and $E$ succeeds producing a set of equations $E'$ which describes a solution of the goal.
\end{itemize}

Rules (inductively) defining these judgments are shown in \refToFigure{opsem}.

\begin{figure}[t]
$
\begin{array}{l}
\Rule{empty}
{  }
{\sem{\prog}{\emptygoal}{E}{E}}
{}
\\[4ex]
\Rule{step}
{  
  \begin{array}{l}
  	\clause{p(t_1, \ldots, t_n)}{A_1, \ldots, A_m} 
      \mbox{ renaming of a clause in $P$ with 	fresh variables} \\
    E_1\cup \{s_1=t_1,\dots,s_n=t_n\} \mbox{ solvable} \\
    \sem{\prog}{A_1,\dots,A_m}{E_1\cup \{s_1=t_1,\dots,s_n=t_n\}}{E_2} \\
    \sem{\prog}{G_1,G_2}{E_2}{E_3} 
  \end{array}
}
{\sem{\prog}{G_1,p(s_1,\dots,s_n),G_2}{E_1}{E_3} }
{}
\\[4ex]
\Rule{co-empty}
{ }
{ \cosem{S}{\progcoax{\prog}{\cofacts}}{\emptygoal}{E}{E} }
{}
\\[4ex]
\Rule{co-step}
{  
  \begin{array}{l}
    \clause{p(t_1,\dots,t_n)}{A_1,\ldots,A_m} 
      \mbox{ renaming of a clause in $P$ with fresh variables} \\
    E_1\cup \{s_1=t_1,\dots,s_n=t_n\} \mbox{ solvable} \\
    \cosem{S\cup \{p(s_1,\dots,s_n)\}}{\progcoax{\prog}{\cofacts}}{A_1,\dots,A_m}{E_1\cup \{s_1=t_1,\dots,s_n=t_n\}}{E_2} \\
  \cosem{S}{\progcoax{\prog}{\cofacts}}{G_1,G_2}{E_2}{E_3} 
  \end{array}
}
{ \cosem{S}{\progcoax{\prog}{\cofacts}}{G_1,p(s_1,\dots,s_n),G_2}{E_1}{E_3} }
{}
\\[4ex]
\Rule{co-hyp}
{  
  \begin{array}{l}
    p(t_1,\dots,t_n) \in S \qquad
    E_1\cup \{s_1=t_1,\dots,s_n=t_n\} \mbox{ solvable} \\
    \sem{{\Extended{\prog}{\cofacts}}}{p(s_1,\dots,s_n)}{E_1\cup \{s_1=t_1,\dots,s_n=t_n\}}{E_2}\\
    \cosem{S}{\progcoax{\prog}{\cofacts}}{G_1,G_2}{E_2}{E_3}
  \end{array}
}
{ \cosem{S}{\progcoax{\prog}{\cofacts}}{G_1,p(s_1,\dots,s_n),G_2}{E_1}{E_3} }
{}
\\[4ex]
\Rule{main}
{  
  \cosem{\emptyset}{\progcoax{\prog}{\cofacts}}{G}{\emptyset}{E}
}
{ \cosemMain{\progcoax{\prog}{\cofacts}} {G} {E} }
{}
\end{array}
$
\caption{Big-step operational semantics}\label{fig:opsem}
\end{figure}

The first two rules define the judgment $\sem{\prog}{G}{E}{E'}$ corresponding to standard SLD resolution. In rule \rn{empty}, resolution of the empty goal succeeds. In rule \rn{step}, an atom $p(s_1,\dots,s_n)$ is selected from the goal to be resolved, and a clause of the program is chosen such that the selected atom unifies with the head of the clause, as expressed by the fact that adding the equations $s_1=t_1, \ldots,s_n=t_n$ to the current set we get a solvable set of equations. Then, resolution of the original goal succeeds if resolution of both the body of the clause and the remaining goal succeed.

The following three rules define the judgment $\cosem{\cohyp}{\progcoax{\prog}{\cofacts}}{G}{E}{E'}$ corresponding to an intermediate step of resolution. Rules \rn{co-empty} and \rn{co-step} are analogous to rules \rn{empty} and \rn{step}. The only difference is that, in rule \rn{co-step}, in the resolution of the body of the clause, the selected atom is added to the current set of coinductive hypotheses. 
In this way, rule \rn{co-hyp} can handle the case when resolution encounters the same atom for the second time, that is, more formally, the selected atom unifies with a coinductive hypothesis, as expressed by the fact that adding the equations $s_1=t_1, \ldots,s_n=t_n$ to the current set we get a solvable set of equations. In this case, standard SLD resolution of such atom is triggered in the program ${\Extended{\prog}{\cofacts}}$ obtained by enriching $\prog$ by the co-facts in $\cofacts$, and resolution of the original goal succeeds if both such standard SLD resolution of the selected atom and resolution of the remaining goal succeed. 

Finally, rule \rn{main} defines the main judgment $\cosemMain{\progcoax{\prog}{\cofacts}} {G} {E}$ corresponding to an initial step of resolution, where the goal is syntactic, that is, no equation has been produced yet.  

In the following, to avoid confusion with proof trees considered in \refToSection{cofacts}, trees obtained \FD{by} instantiating the rules of big-step operational semantics are called \emph{derivations}.

\paragraph{A derivation example}

{Assuming $\prog$ and $\cofacts$ from the previous section, which define, among others, the predicate \lstinline{max} computing the greatest element of a list, we show a derivation for} the judgment\\ $\cosem{\emptyset}{\progcoax{\prog}{\cofacts}}{\lstmath{max(L,M)}}{E_0}{E_4}$, with\footnote{We use the abbreviated Prolog syntax for lists in this example.}  $E_0=\{\lstmath{L=[1,2|L]}\}$,  $E_4=\{\lstmath{L=[1,2|L],M2=M,M3=2,M2=2,M1=2,M=2}\}$. For keeping the derivation
simpler, in rules (co-step) and (step) we do not manifest some substitutions, but, instead,
we implicitly apply them in the resulting goals.  Furthermore, we \FD{consider the} predicate \lstinline{_ is max(_,_)} 
\FD{to be} predefined, as it is the case in real Prolog systems; for it we have introduced a special semantic rule (labeled with p, for predefined)
that deals with standard inductive predefined predicates (see \refToSection{impl}). Finally, for space reasons we have abbreviated the
names of the applied rules.
$$
\begin{small}
\begin{array}{l}
\Rule{c-s}
{
          \Rule{c-s}
          {
            \Rule{c-h}{
            \Rule{s}        
                {\blindfrac{\der_1}\quad
                  \Rule{e}{}{\sem{\Extended{\prog}{\cofacts}}{\emptygoal}{E_2}{E_2}}{}
                }
                {\sem{\Extended{\prog}{\cofacts}}{\lstmath{max(L,M2)}}{E_0\cup\{\lstmath{M2=M}\}}{E_2}}
                {}
            \qquad
            \blindfrac{\der_2}
}
{\cosem{S_2}{\progcoax{\prog}{\cofacts}}{\lstmath{max(L,M2),M1 is max(2,M2)}}{E_0}{E_3}}{}
            \quad
            \blindfrac{\der_3}
          }
          {\cosem{S_1}{\progcoax{\prog}{\cofacts}}{\lstmath{max([2|L],M1),M is max(1,M1)}}{E_0}{E_4}}
          {}
          {}
          \quad
          \Rule{c-e}
          {}
          {\cosem{\emptyset}{\progcoax{\prog}{\cofacts}}{\emptygoal}{E_4}{E_4}}
          {}
}
{\cosem{\emptyset}{\progcoax{\prog}{\cofacts}}{\lstmath{max(L,M)}}{E_0}{E_4}}
{}
\\
\end{array}
\end{small}
$$
with $E_2=\{\lstmath{L=[1,2|L],M2=M,M3=2,M2=2}\}$, $E_3=\{\lstmath{L=[1,2|L],M2=M,M3=2,M2=2,M1=2}\}$, 
\FD{$S_1=\{\lstmath{max(L,M)}\}$, $S_2 = \{\lstmath{max(L,M), max([2|L],M1)}\}$,  }
and where $\der_1$, $\der_2$, and $\der_3$  are the following derivations:

\begin{flushleft}
$
\begin{small}
\begin{array}{l}
\der_1=
\Rule{s*}
{
  \Rule{e}{}{\sem{\Extended{\prog}{\cofacts}}{\emptygoal}{E_1}{E_1}}{}
  \quad
  \Rule{s}
       {
         \Rule{e}{}{\sem{\Extended{\prog}{\cofacts}}{\emptygoal}{E_2}{E_2}}{}
         \quad
         \Rule{e}{}{\sem{\Extended{\prog}{\cofacts}}{\emptygoal}{E_2}{E_2}}{}
       }
       {\sem{\Extended{\prog}{\cofacts}}{\lstmath{M2 is max(1,M3)}}{E_1}{E_2}}{}
}
{\sem{\Extended{\prog}{\cofacts}}{\lstmath{max([2|L],M3),M2 is max(1,M3)}}{E_0\cup\{\lstmath{M2=M}\}}{E_2}}
{}
\\[4ex]
\der_2 =         \Rule{p}
                {
                  \Rule{e}{}{\sem{\prog}{\emptygoal}{E_3}{E_3}}{}\quad
                  \Rule{c-e}{}{\cosem{S_2}{\progcoax{\prog}{\cofacts}}{\emptygoal}{E_3}{E_3}}{}
                }
                {\cosem{S_2}{\progcoax{\prog}{\cofacts}}{\lstmath{M1 is max(2,M2)}}{E_2}{E_3}}
                {}
\\[4ex]
\der_3=
\Rule{p}
{
  \Rule{e}{}{\sem{\prog}{\emptygoal}{E_4}{E_4}}{}
  \quad
  \Rule{c-e}{}{\cosem{S_1}{\progcoax{\prog}{\cofacts}}{\emptygoal}{E_4}{E_4}}{}
}
{\cosem{S_1}{\progcoax{\prog}{\cofacts}}{\lstmath{M is max(1,M1)}}{E_3}{E_4}}{}
\end{array}
\end{small}
$
\end{flushleft}
with $E_1=\{\lstmath{L=[1,2|L],M2=M,M3=2}\}$.

In the whole derivation the co-fact for \lstinline{max} is used just once in rule (step) marked with (s*) in $\der_1$ derivation;
the co-fact could be employed also in rule (step) in derivation $\der_1$, but without success, since the substitution 
$\{\lstmath{M2=M,M2=1}\}$ cannot satisfy the goal \lstinline{M1 is max(2,M2),M is max(1,M1)}.

\paragraph{Soundness proof}
The big-step operational semantics is sound with respect to the declarative semantics defined in \refToSection{cofacts}. 
That is, if resolution of the goal $G$ succeeds producing a set of equations $E$ which describe a solution of the goal, then (any ground instance of) this solution is a set of atoms which are true in the declarative semantics.
{Formally, set $\Gsol{E}=\{\theta\mid\theta\ \mbox{ground},\ \ApplySubst{t}{\theta} = \ApplySubst{t'}{\theta}\ \mbox{for all}\ t=t'\in E\}$ the set of the \emph{ground solutions} of $E$, necessarily defined on the set $\Var{E}$ of the variables occurring in $E$. In the following, when we pick a substitution from $\Gsol{E}$,  we will implicitly assume that this substitution has $\Var{E}$ as domain.}
Note that the set of ground solutions is antitone with respect to set inclusion, that is, if $E_1 \subseteq E_2$ then $\Gsol{E_2}\subseteq \Gsol{E_1}$.
Soundness can be stated as follows.

\begin{theorem}[Soundness] \label{theo:soundness}
If $\cosemMain{\progcoax{\prog}{\cofacts}}{G}{E}$ holds  then, for each $\theta \in \Gsol{E}$, $\ApplySubst{G}{\theta} \subseteq \Generated{\prog}{\cofacts}$.
\end{theorem}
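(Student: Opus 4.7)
The strategy is to apply the bounded coinduction principle \refToTheorem{bcoind} to the set
\[
\Goal = \{ \ApplySubst{B}{\theta} \mid B\ \mbox{is the atom selected at the top of some derivation of}\ \cosem{\cohyp}{\progcoax{\prog}{\cofacts}}{G_1, B, G_2}{E}{E'},\ \theta \in \Gsol{E'}\}.
\]
A preliminary monotonicity lemma (output equations always contain input equations, proved by induction on the rules of \refToFigure{opsem}) together with antitonicity of $\Gsol{\cdot}$ ensures that, for any main derivation $\cosemMain{\progcoax{\prog}{\cofacts}}{G}{E}$ and any $\theta\in\Gsol{E}$, every atom in $G$ becomes selected in a deeper sub-derivation whose own output is still solved by $\theta$; hence $\ApplySubst{G}{\theta}\subseteq\Goal$ and it suffices to show $\Goal\subseteq\Generated{\prog}{\cofacts}$. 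I would also first prove the standard soundness of SLD: if $\sem{\prog}{G}{E}{E'}$ and $\theta\in\Gsol{E'}$, then every ground instance of an atom in $G$ has a finite proof tree in $\prog$, by a routine induction on the SLD derivation.

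For boundedness ($\Goal \subseteq \Ind{\Extended{\prog}{\cofacts}}$), I would argue by induction on the witnessing derivation that $\ApplySubst{B}{\theta}$ has a finite proof tree in $\Extended{\prog}{\cofacts}$. In case (co-step), the IH applied to each deeper sub-derivation where a body atom $A_i$ becomes selected yields finite proof trees for all $\ApplySubst{A_i}{\theta}$, which combine through the ground instance $\clause{\ApplySubst{p(t_1,\dots,t_n)}{\theta}}{\ApplySubst{A_1}{\theta},\dots,\ApplySubst{A_m}{\theta}}$ of the clause used by the rule (the equations $s_i=t_i$ belong to $E'$, so $\theta$ identifies head and selected atom). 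In case (co-hyp), the SLD soundness lemma applied to $\Extended{\prog}{\cofacts}$ yields the finite proof tree directly.

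For consistency ($\Goal \subseteq \Op{\prog}(\Goal)$), take $A=\ApplySubst{B}{\theta}\in\Goal$ witnessed by a sub-derivation whose top rule is either (co-step) or (co-hyp). Case (co-step) immediately supplies a ground clause in $\coGround{\prog}$ whose head is $A$ and whose body atoms $\ApplySubst{A_j}{\theta}$ are selected in strictly deeper sub-derivations, hence belong to $\Goal$. Case (co-hyp) reduces to the previous one: $B$ unifies with some $B'\in\cohyp$, under $\theta$ they denote the same ground atom $A$, and $B'$ was placed into $\cohyp$ as the selected atom of an ancestor (co-step), whose clause instance provides the required witness. An application of \refToTheorem{bcoind} then yields $\Goal\subseteq\Generated{\prog}{\cofacts}$, and hence the statement.

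The main technical obstacle will be the bookkeeping of substitutions and equation sets across the derivation tree. Monotonicity of the equation sets and antitonicity of $\Gsol{\cdot}$ together let a single global $\theta$ be transported, after extension to the fresh variables introduced by clause renamings, coherently across every sub-derivation, so that the body atoms of each applied clause are instantiated by $\theta$ to atoms that genuinely occur as selected atoms deeper in the tree — which is exactly what populates $\Goal$ enough for the consistency check to close. The (co-hyp) case is the subtlest, since there the consistency witness lives at an ancestor node rather than at the current one, and one must reach back through the derivation tree to locate it.
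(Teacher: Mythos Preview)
Your overall plan—apply bounded coinduction to a set of ``operationally reachable'' ground atoms—matches the paper's, and your boundedness argument is essentially fine. The genuine gap is in the consistency step for the \rn{co-hyp} case, and it stems from your choice of $\Goal$: taking $\theta\in\Gsol{E'}$ where $E'$ is the output of the \emph{sub}-derivation makes $\Goal$ too large, and consistency fails on the extra elements.

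Concretely, take $\prog=\{\,\clause{p(X)}{q(X),r(X)},\ \clause{q(X)}{p(X)},\ r(a)\,\}$ with co-fact $p(\_)$. In the main derivation of $p(X)$, the inner \rn{co-hyp} node selects $p(X_2)$ and its output $E'$ only records $X=X_1=X_2$ (plus the fresh co-fact variable); the constraint $X_1=a$ coming from $r$ is added \emph{later}, in a sibling branch. Hence $\theta$ with $\theta(X_2)=b$ lies in $\Gsol{E'}$, so $p(b)\in\Goal$. For consistency you reach back to the ancestor \rn{co-step}, whose clause yields $\clause{p(b)}{q(b),r(b)}$, and you now need $r(b)\in\Goal$. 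But every node at which an $r$-atom is selected has $X_1=a$ in its output, so only $r(a)\in\Goal$. Thus $p(b)\notin\Op{\prog}(\Goal)$ and the consistency hypothesis of \refToTheorem{bcoind} is violated. The failure is exactly the scenario you flagged as ``subtlest'': the ancestor's body atoms may be resolved at nodes whose output equation sets strictly extend $E'$, and an arbitrary $\theta\in\Gsol{E'}$ need not respect those later equations. Your remark about ``extending $\theta$ to fresh variables'' does not help here—the obstruction is not undefined variables but an actual conflict with a later equation.

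The paper sidesteps this by taking the candidate set to be $\OpSem{\prog}{\cofacts}=\{\ApplySubst{A}{\theta}\mid\cosemMain{\progcoax{\prog}{\cofacts}}{A}{E},\ \theta\in\Gsol{E}\}$, i.e.\ only atoms coming from \emph{main} derivations with $\theta$ solving the \emph{final} equation set. Consistency is then a separate theorem: from a main derivation of $A$ one manufactures a \emph{new main derivation} of the body $B_1,\dots,B_n$ (so $S=\emptyset$ again). This is where the real work lies—one must eliminate the coinductive hypothesis $A$ from the sub-derivation, replacing each \rn{co-hyp} use of $A$ by a freshly renamed copy of the whole derivation (this is the paper's \refToLemma{atom-substitution} and \refToTheorem{consistency}). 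Your ``reach back to the ancestor'' shortcut tries to avoid this reconstruction, but the counterexample shows it cannot be avoided: the body atoms must be shown to have \emph{their own} top-level derivations, not merely to appear somewhere in the original tree.
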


The proof of soundness is an application of the bounded coinduction principle (\refToTheorem{bcoind}) introduced in \refToSection{cofacts}. For sake of clarity, we provide the proof in a top-down manner, that is, we first give the proof schema, and  then state and prove the three needed subtheorems. 

Set $\OpSem{\prog}{\cofacts}$ the set of atoms which are true in the operational semantics, that is, $\OpSem{\prog}{\cofacts}=\{\ApplySubst{A}{\theta} \mid\ \cosemMain{\progcoax{\prog}{\cofacts}}{A}{E}, \theta \in \Gsol{E}\}$.
We write $\Agree{\theta}{\theta'}$ if $\theta$ and $\theta'$ agree on the common domain, that is, for each $X \in \dom(\theta) {\cap} \dom(\theta')$, $\ApplySubst{X}{\theta} = \ApplySubst{X}{\theta'}$.
Note that, if $\Agree{\theta}{\theta'}$, then $\theta\cup\theta'$ is a well-defined substitution; analogously, we write $\Agree{E_1}{E_2}$ if each ground solution of $E_1$ agrees with a ground solution of $E_2$ on the common domain, that is, for each $\theta \in \Gsol{E_1}$  there exists $\theta' \in \Gsol{E_2}$ such that $\Agree{\theta}{\theta'}$.

\begin{proof}[Proof of \refToTheorem{soundness}]
{Thanks to \refToLemma{cosem-properties}(4), the statement can be equivalently formulated as\\
$\OpSem{\prog}{\cofacts} \subseteq \Generated{\prog}{\cofacts}$.}
By bounded coinduction, we have to show that:
\begin{description}
\item[Boundedness] $\OpSem{\prog}{\cofacts} \subseteq \Ind{\Extended{\prog}{\cofacts}}$, that is, each $A\in\OpSem{\prog}{\cofacts}$ has a finite proof tree in ${\Extended{\prog}{\cofacts}}$.
\item[Consistency] $\OpSem{\prog}{\cofacts} \subseteq \Op{\prog}(\OpSem{\prog}{\cofacts})$, that is, \FD{for all $\ApplySubst{A}{\theta} \in \OpSem{\prog}{\cofacts}$,\\
 there are $\ApplySubst{B_1}{\theta}, \ldots, \ApplySubst{B_n}{\theta} \in\OpSem{\prog}{\cofacts}$ such that  $\clause{\ApplySubst{A}{\theta}}{\ApplySubst{B_1}{\theta}, \ldots, \ApplySubst{B_n}{\theta}}  \in\coGround{\prog}$. }
\end{description}

The two conditions can be proved as follows. 
\begin{itemize}
\item To prove boundedness, we have to show that, if $\cosemMain{\progcoax{\prog}{\cofacts}}{A}{E}, \theta \in \Gsol{E}$, then $\ApplySubst{A}{\theta}\in\Ind{\Extended{\prog}{\cofacts}}$. This can be proved by two steps:
\begin{itemize}
\item $\cosemMain{\progcoax{\prog}{\cofacts}}{A}{E}$ implies \FD{$\sem{\Extended{\prog}{\cofacts}}{A}{\emptyset}{E'}$ with $\theta \in \Gsol{E'}$, that follows from \refToTheorem{coherency}, since $E' \subseteq E$ and so $\Gsol{E} \subseteq \Gsol{E'}$. }
\item  $\sem{\Extended{\prog}{\cofacts}}{A}{\emptyset}{E}$ implies $\ApplySubst{A}{\theta} \in \Ind{\Extended{\prog}{\cofacts}}$ (by \refToTheorem{ind-soundness}) 
\end{itemize}
\item Consistency can be proved as follows:
\begin{itemize}
\item $\cosemMain{\progcoax{\prog}{\cofacts}}{A}{E}$ implies that \FD{there exist $B_1, \ldots, B_n$ such that  $\clause{\ApplySubst{A}{\theta}}{\ApplySubst{B_1}{\theta}, \ldots, \ApplySubst{B_n}{\theta}} \in \coGround{\prog}$ and} $\cosemMain{\progcoax{\prog}{\cofacts}}{B_1, \ldots, B_n}{E'}$ with $\Agree{E}{E'}$ (by \refToTheorem{consistency})
\item $\cosemMain{\progcoax{\prog}{\cofacts}}{B_1, \ldots, B_n}{E'}$ implies $\cosemMain{\progcoax{\prog}{\cofacts}}{B_i}{E'_i}$ with $E'_i \subseteq E'$ (by \refToLemma{cosem-properties}(4)), \FD{that implies $\Gsol{E'} \subseteq \Gsol{E'_i}$. }
\item $\cosemMain{\progcoax{\prog}{\cofacts}}{B_i}{E'_i}$ implies $\ApplySubst{B_i}{\theta} \in \OpSem{\prog}{\cofacts}$,  \FD{that follows from \EZ{the following facts:}
\begin{itemize}
\item there exists $\sigma \in \Gsol{E'}$ such that $\Agree{\theta}{\sigma}$, since $\Agree{E}{E'}$
\item $\sigma \in \Gsol{E'_i}$, hence $\Var{B_i} \subseteq \dom(\sigma)$
\item $\ApplySubst{B_i}{\sigma} \in \OpSem{\prog}{\cofacts}$  by definition of $\OpSem{\prog}{\cofacts}$
\item $\ApplySubst{B_i}{\theta} = \ApplySubst{B_i}{\sigma}$ since $\Agree{\theta}{\sigma}$ and $\Var{B_i} \subseteq \dom(\theta) \cap \dom(\sigma)$.
\end{itemize} }
\end{itemize} 
\end{itemize}
\end{proof}

Now we detail the proof, by stating and proving the three theorems \FD{and the lemma} used above. Some proofs are given in \refToApp{proof}.
We first introduce the following notations. 
If $A=p(t_1, \ldots, t_n)$, and $B=p(s_1, \ldots, s_n)$, then we write $\UnifyEq{A}{B}$ for the set of equations that represents the unification of $A$ with $B$, that is, $\UnifyEq{A}{B} = \{s_1=t_1, \ldots, s_n=t_n\}$.
\EZComm{non ho capito questo: Note that, given the set of equations $E=\{s=t, t=r\}$, we have $\Gsol{E} = \Gsol{E \cup \{s=r\}}$, that can be trivially extended to more than two equations.}




The boundedness condition is obtained by \refToTheorem{coherency} and \refToTheorem{ind-soundness}.
The former states that resolution under coinductive hypotheses implies standard SLD resolution in the program enriched by co-facts, the {latter} states that standard SLD resolution is sound with respect to the inductive semantics. 

We start with an auxiliary lemma.

 \begin{lemma} \label{lemma:sem-properties}
If $\sem{\prog}{G}{E_1}{E_2}$ holds, then 
\begin{enumerate}
\item if  $E'_1 \subseteq E_1$ then $\sem{\prog}{G}{E'_1}{E'_2}$ holds, with $E'_2 \subseteq E_2$ 
\item if $E \cup E_2$ is solvable then $\sem{\prog}{G}{E_1\cup E}{E_2 \cup E}$ holds
\end{enumerate}
\end{lemma}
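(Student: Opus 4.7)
The plan is to prove both clauses by induction on the derivation of $\sem{\prog}{G}{E_1}{E_2}$, case-splitting on the last rule used, namely \rn{empty} or \rn{step}.

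For clause (1), the base case \rn{empty} is immediate: from $\sem{\prog}{\emptygoal}{E_1}{E_1}$ and any $E'_1\subseteq E_1$, rule \rn{empty} gives $\sem{\prog}{\emptygoal}{E'_1}{E'_1}$, and $E'_1\subseteq E_1=E_2$ as required. In the \rn{step} case, the key observation is that any subset of a solvable equation set is itself solvable, hence $E'_1\cup\{s_1=t_1,\ldots,s_n=t_n\}\subseteq E_1\cup\{s_1=t_1,\ldots,s_n=t_n\}$ is solvable. I then apply the inductive hypothesis twice in sequence: first to the recursive premise resolving the clause body, obtaining some $E'_2\subseteq E_2$; then to the continuation premise resolving $G_1,G_2$ starting from $E'_2$, obtaining some $E'_3\subseteq E_3$. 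Reapplying \rn{step} with the same clause and unification witnesses yields $\sem{\prog}{G_1,p(s_1,\ldots,s_n),G_2}{E'_1}{E'_3}$.

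For clause (2), the argument is dual. The \rn{empty} base case is handled by a direct application of \rn{empty}, producing $\sem{\prog}{\emptygoal}{E_1\cup E}{E_1\cup E}$. The \rn{step} case rests on a monotonicity invariant that any successful resolution satisfies $E_1\subseteq E_2$ — a one-line side induction on the rules, since both \rn{empty} and \rn{step} only ever extend the equation set. Granted this, if $E\cup E_3$ is solvable, then so are its subsets $E\cup E_1\cup\{s_1=t_1,\ldots,s_n=t_n\}$ and $E\cup E_2$; this lets me invoke the inductive hypothesis on each premise with the extra equations $E$, obtaining $\sem{\prog}{A_1,\ldots,A_m}{E_1\cup\{s_i=t_i\}_i\cup E}{E_2\cup E}$ and $\sem{\prog}{G_1,G_2}{E_2\cup E}{E_3\cup E}$. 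Reapplying \rn{step} closes the case.

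I expect the only real subtlety to be the monotonicity invariant needed for (2), and the handling of the fresh variables introduced by clause renaming in the \rn{step} rule. Since those variables are chosen disjoint from every equation already in scope, they remain disjoint from any ambient $E$, so the inductive invocations are well-formed and the freshness side condition of \rn{step} is preserved. Beyond these points, both parts amount to routine rule-directed bookkeeping.
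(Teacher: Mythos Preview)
Your proposal is correct and follows exactly the approach the paper indicates, namely a straightforward induction on the inference rules \rn{empty} and \rn{step}; the paper's own proof is the single line ``Straightforward induction on inference rules.'' The auxiliary monotonicity invariant $E_1\subseteq E_2$ you isolate for clause~(2) is indeed the one non-obvious ingredient the paper leaves implicit, and your treatment of freshness is adequate.
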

\begin{proof}
{Straightforward} induction on inference rules.
\end{proof}

\begin{corollary} \label{cor:ind-eq-substitution}
If $\sem{\prog}{G}{E_1 \cup E}{E_2}$ holds and $E_2 \cup E'$ is solvable then $\sem{\prog}{G}{E_1\cup E'}{E''}$ holds,  with $E' \subseteq E'' \subseteq E_2 \cup E'$.
\end{corollary}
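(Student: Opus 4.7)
The plan is to derive the conclusion by a straightforward combination of the two parts of \refToLemma{sem-properties}: first peeling off $E$ from the input using part (1), then gluing $E'$ on top using part (2).

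More concretely, I would proceed as follows. Starting from $\sem{\prog}{G}{E_1 \cup E}{E_2}$, since $E_1 \subseteq E_1 \cup E$, part (1) of \refToLemma{sem-properties} immediately yields $\sem{\prog}{G}{E_1}{\tilde{E}}$ for some $\tilde{E} \subseteq E_2$. At this stage the hypothesis ``$E_2 \cup E'$ is solvable'' kicks in: because $\tilde{E} \subseteq E_2$, the set $\tilde{E} \cup E'$ is a subset of $E_2 \cup E'$, hence is solvable as well. This is exactly the side condition required to invoke part (2) of \refToLemma{sem-properties} with the additional equations $E'$, so we get $\sem{\prog}{G}{E_1 \cup E'}{\tilde{E} \cup E'}$.

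Setting $E'' \defas \tilde{E} \cup E'$, the inclusion $E' \subseteq E''$ is trivial, and $E'' = \tilde{E} \cup E' \subseteq E_2 \cup E'$ follows from $\tilde{E} \subseteq E_2$. This gives both the derivability statement and the bracketing $E' \subseteq E'' \subseteq E_2 \cup E'$ demanded by the corollary.

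There is essentially no obstacle here: the whole argument is a two-line composition of the monotonicity/weakening properties already established by \refToLemma{sem-properties}, and the only thing to check carefully is that the solvability hypothesis for part (2) is transported correctly, which is immediate from $\tilde{E}\subseteq E_2$ together with the antitonicity of solvability under subset inclusion of equation sets. Hence the proof reduces to ``apply (1), then apply (2)'' and can be stated in just a couple of sentences.
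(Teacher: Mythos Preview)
Your proof is correct and matches the paper's intended approach: the paper states this corollary without proof immediately after \refToLemma{sem-properties}, and the analogous \refToCorollary{eq-substitution} for the coinductive judgment is justified in one line by ``From point 2 and 3 of \refToLemma{cosem-properties}'', which is precisely your two-step composition (strip $E$ via monotonicity, then adjoin $E'$ via the solvability-preserving extension). Your explicit verification that $\tilde{E}\cup E'$ inherits solvability from $E_2\cup E'$ is the only detail worth spelling out, and you handle it correctly.
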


The following theorem states that if resolution under coinductive hypotheses of a goal succeeds, then standard SLD resolution of the goal in the program enriched by co-facts succeeds as well.\EZComm{non vale il viceversa, corrisponde al fatto che la semantica induttiva del programma esteso \`e il bound; potrebbe sembrare un po' strano che $S$ sparisca.}

\begin{theorem}\label{theo:coherency}
If $\cosem{S}{\progcoax{\prog}{\cofacts}}{G}{E_1}{E_2}$ holds, then $\sem{\Extended{\prog}{\cofacts}}{G}{E_1}{E'_2}$ holds with $E'_2 \subseteq E_2$.
\end{theorem}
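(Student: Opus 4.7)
The plan is to proceed by induction on the derivation of $\cosem{S}{\progcoax{\prog}{\cofacts}}{G}{E_1}{E_2}$, with a case analysis on the last rule applied, namely one of (co-empty), (co-step), or (co-hyp). The induction hypothesis states that for every strict sub-derivation $\cosem{S'}{\progcoax{\prog}{\cofacts}}{G'}{E'_1}{E'_2}$ one can build an SLD derivation $\sem{\Extended{\prog}{\cofacts}}{G'}{E'_1}{E''_2}$ with $E''_2 \subseteq E'_2$, and the goal in each case is to reassemble these into an SLD derivation for the conclusion.

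The cases (co-empty) and (co-step) are straightforward. For (co-empty), $G=\emptygoal$ and $E_2=E_1$, so rule (empty) directly yields $\sem{\Extended{\prog}{\cofacts}}{\emptygoal}{E_1}{E_1}$. For (co-step), the clause used lies in $\prog$, hence in $\Extended{\prog}{\cofacts}$; I would apply the induction hypothesis to the two coinductive premises, obtaining an SLD derivation of the body ending in some $E'_2 \subseteq E_2$ and an SLD derivation of the continuation from $E_2$ with final set contained in $E_3$. A single use of \refToLemma{sem-properties}(1) shrinks the starting set of the continuation from $E_2$ down to $E'_2$, preserving the inclusion on the final set, after which rule (step) with the same clause concludes the case.

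The case (co-hyp) is the interesting one. The premise already provides an SLD derivation $\sem{\Extended{\prog}{\cofacts}}{p(s_1,\dots,s_n)}{E_1 \cup U}{E_2}$, where $U = \{s_1=t_1,\dots,s_n=t_n\}$ comes from unification with the coinductive hypothesis; however, the target conclusion requires resolving $p(s_1,\dots,s_n)$ inside $G_1, p(s_1,\dots,s_n), G_2$ starting from $E_1$ alone, with $U$ peeled away. My plan is to invert this SLD premise: since its goal is non-empty, it must begin with an application of (step) selecting $p(s_1,\dots,s_n)$ via some clause $\clause{p(t'_1,\dots,t'_n)}{A_1,\dots,A_m}$ of $\Extended{\prog}{\cofacts}$ (with fresh variables), producing a body derivation $\sem{\Extended{\prog}{\cofacts}}{A_1,\dots,A_m}{E_1 \cup U \cup U'}{E_2}$, where $U' = \{s_1=t'_1,\dots,s_n=t'_n\}$. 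By \refToLemma{sem-properties}(1), I can restrict this body derivation to start from $E_1 \cup U'$, which is solvable as a subset of the solvable set $E_1 \cup U \cup U'$, obtaining some $E''_2 \subseteq E_2$. Applying the induction hypothesis to the continuation premise gives $\sem{\Extended{\prog}{\cofacts}}{G_1, G_2}{E_2}{E'_3}$ with $E'_3 \subseteq E_3$, and a second application of \refToLemma{sem-properties}(1) using $E''_2 \subseteq E_2$ shrinks the starting set of this continuation to $E''_2$, producing a final set $E'''_3 \subseteq E_3$. Rule (step) applied to $G_1, p(s_1,\dots,s_n), G_2$ with the extracted clause then assembles all of these pieces into the desired SLD derivation.

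The main obstacle is precisely the (co-hyp) case: the auxiliary SLD derivation appearing in the premise is parameterised by the extra equations $U$ coming from the coinductive hypothesis, whereas the target SLD derivation must start from $E_1$ without them. The inversion of this auxiliary derivation together with the monotonicity of \refToLemma{sem-properties}(1) is the essential tool that makes the simulation possible. The only further bookkeeping concerns the freshness of the variables in the extracted clause with respect to the outer context, which is routine and can always be arranged by a suitable $\alpha$-renaming of the clause.
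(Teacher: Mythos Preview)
Your proposal is correct and follows essentially the same approach as the paper: induction on the derivation, with \refToLemma{sem-properties}(1) as the key tool for aligning the intermediate equation sets, and an inversion of the SLD premise in the \rn{co-hyp} case to extract a clause of $\Extended{\prog}{\cofacts}$. The only cosmetic difference is that in the \rn{co-hyp} case you invert first and then shrink away the unification equations $U$, whereas the paper shrinks first (obtaining $\sem{\Extended{\prog}{\cofacts}}{A}{E_1}{E'_2}$ directly) and then inverts; both orderings are valid and lead to the same reassembly via rule \rn{step}.
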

\begin{proof}
By induction on the inference rules which define $\cosem{S}{\progcoax{\prog}{\cofacts}}{G}{E_1}{E_2}$.
\begin{description}
\item[\rn{co-empty}] We trivially conclude by rule \rn{empty}.
\item[\rn{co-step}] By inductive hypothesis we get that $\sem{\Extended{\prog}{\cofacts}}{A_1, \ldots, A_n}{E_1\cup \UnifyEq{A}{A'}}{E'_2}$ and $\sem{\Extended{\prog}{\cofacts}}{G_1, G_2}{E_2}{E'_3}$ hold with $E'_2 \subseteq E_2$, $E'_3 \subseteq E_3$ and $\clause{A'}{A_1, \ldots, A_n}$ a {fresh} renaming of a clause in $\Extended{\prog}{\cofacts}$. 
By \refToLemma{sem-properties}(1) we get that $\sem{\Extended{\prog}{\cofacts}}{G_1, G_2}{E'_2}{E''_3}$ holds, with $E''_3\subseteq E'_3 \subseteq E_3$, thus by rule \rn{step} we get the thesis.
\item[\rn{co-hyp}]  By hypothesis and by \refToLemma{sem-properties}(1) we get that $\sem{\Extended{\prog}{\cofacts}}{A}{E_1}{E'_2}$ holds, with $E'_2 \subseteq E_2$. 
From this, by rule \rn{step}, it follows that $\sem{\Extended{\prog}{\cofacts}}{A_1, \ldots, A_n}{E_1 \cup \UnifyEq{A}{A'}}{E'_2}$ holds, with $\clause{A'}{A_1, \ldots, A_n}$ a {fresh} renaming of a clause in $\Extended{\prog}{\cofacts}$. 
By inductive hypothesis we get $\sem{\Extended{\prog}{\cofacts}}{G_1, G_2}{E_2}{E'_3}$ holds with $E'_3 \subseteq E_3$ and by \refToLemma{sem-properties}(1) we get that $\sem{\Extended{\prog}{\cofacts}}{G_1, G_2}{E'_2}{E''_3}$ holds with $E''_3 \subseteq E'_3 \subseteq E_3$. 
Therefore by rule \rn{step} we get the thesis.
\end{description}
\end{proof}

The following theorem states that if standard SLD resolution of a goal succeeds, producing a set of equations $E$ which describes a solution of the goal, then this solution is a set of atoms which are true in the inductive semantics. In other words, this theorem states the standard soundness property of SLD resolution. The theorem could be derived by showing equivalence of the $\sem{\prog}{A}{\emptyset}{E}$ judgment with the traditional small-step definition of SLD resolution (as done in \cite{AnconaDovier15} for co-SLD resolution), and relying on the well-know soundness of the latter. A direct proof can be done by induction on the rules which define $\sem{\prog}{A}{\emptyset}{E}$, as stated below 
\EZComm{probabilmente in realt\`a va generalizzata la tesi?}

\begin{theorem}[Soundness of standard SLD resolution] \label{theo:ind-soundness}
If $\sem{\prog}{A}{\emptyset}{E}$ then, for each $\theta \in \Gsol{E}$, $\ApplySubst{A}{\theta} \in \Ind{\prog}$.
\end{theorem}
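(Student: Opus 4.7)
The plan is to prove the statement by induction on the derivation of $\sem{\prog}{A}{\emptyset}{E}$, but because the only non-trivial rule \rn{step} recursively invokes the judgment on compound goals with nonempty initial equations, I first strengthen the statement to the following:

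\emph{If $\sem{\prog}{G}{E_1}{E_2}$ and $\theta \in \Gsol{E_2}$, then $\Var{G} \subseteq \dom(\theta)$ and $\ApplySubst{G}{\theta} \subseteq \Ind{\prog}$,} viewing a goal $G$ as the set of its atoms. The original statement is recovered by taking $G = A$ and $E_1 = \emptyset$.

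Before the main induction I would record two auxiliary monotonicity facts about resolution, each proved by a straightforward induction on the inference rules: if $\sem{\prog}{G}{E_1}{E_2}$ then (a) $E_1 \subseteq E_2$, and (b) $\Var{G} \cup \Var{E_1} \subseteq \Var{E_2}$. Combined with the antitonicity of $\Gsol{-}$ already noted in the text, (a) lets me freely restrict a ground solution of $E_2$ to a ground solution of any intermediate equation set, and (b) guarantees that the restricted solution is defined on every variable I need to substitute for, including the fresh variables introduced by clause renaming.

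The induction on the derivation of $\sem{\prog}{G}{E_1}{E_2}$ then has only two cases. The \rn{empty} rule is vacuous since $G = \emptygoal$. For the \rn{step} rule, with selected atom $p(s_1,\ldots,s_n)$, clause renaming $\clause{p(t_1,\ldots,t_n)}{A_1,\ldots,A_m}$, intermediate equation set $E_2$, and final set $E_3$, pick $\theta \in \Gsol{E_3}$. By (a) and antitonicity $\theta$ also belongs to $\Gsol{E_2}$ and to $\Gsol{E_1 \cup \UnifyEq{p(s_1,\ldots,s_n)}{p(t_1,\ldots,t_n)}}$, so in particular $\ApplySubst{s_i}{\theta} = \ApplySubst{t_i}{\theta}$ for each $i$, and hence $\ApplySubst{p(s_1,\ldots,s_n)}{\theta} = \ApplySubst{p(t_1,\ldots,t_n)}{\theta}$. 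Applying the induction hypothesis to the premise on $A_1,\ldots,A_m$, and observing that by (b) the fresh variables of the clause all lie in $\Var{E_2} \subseteq \dom(\theta)$, yields $\ApplySubst{A_j}{\theta} \in \Ind{\prog}$ for all $j$. Thus the ground instance $\clause{\ApplySubst{p(t_1,\ldots,t_n)}{\theta}}{\ApplySubst{A_1}{\theta},\ldots,\ApplySubst{A_m}{\theta}}$ belongs to $\coGround{\prog}$ and has body inside $\Ind{\prog}$; since $\Ind{\prog}$ is closed under $\Op{\prog}$, we obtain $\ApplySubst{p(s_1,\ldots,s_n)}{\theta} \in \Ind{\prog}$. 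The remaining atoms in $G_1, G_2$ are handled by applying the induction hypothesis to the other recursive premise $\sem{\prog}{G_1, G_2}{E_2}{E_3}$.

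The main obstacle I expect is the bookkeeping around variable scopes. A clause is renamed with fresh variables each time rule \rn{step} fires, and these variables must be traced through the recursive premises so that any ground solution of the final equation set assigns them concrete, possibly infinite, ground terms making the reconstructed rule instance well-defined in $\coGround{\prog}$. The variable-monotonicity lemma (b) isolates exactly this invariant and is the key bridge that lets the clean induction above go through.
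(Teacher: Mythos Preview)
Your proposal is correct and follows precisely the direct-induction approach that the paper itself suggests; in fact the paper does not spell out a proof of this theorem at all, only remarking that it ``could be derived by showing equivalence \ldots\ with the traditional small-step definition of SLD resolution \ldots\ and relying on the well-known soundness of the latter'' or that ``a direct proof can be done by induction on the rules which define $\sem{\prog}{A}{\emptyset}{E}$.'' Your write-up is therefore a faithful expansion of the second option, and your strengthening of the inductive statement together with the two monotonicity lemmas (a) and (b) are exactly what is needed to make that induction go through.
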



We state now some lemmas needed to prove \refToTheorem{consistency}.

The following lemma states some properties of the judgment $\cosem{\cohyp}{\progcoax{\prog}{\cofacts}}{G}{E}{E'}$.
\FD{In particular, points  1 and 2 state \EZ{a form of monotonicity} with respect to the set of coinductive \EZ{hypotheses} and the set of input equations: \EZ{that is,} we can freely add coinductive \EZ{hypotheses} and remove input equations preserving the results of the evaluation.
Point 3 states another monotonicity property: we can add input equations \EZ{provided that this addition does not} break the solvability of the output system of equations.
Finally, point 4 states a decomposition property: the evaluation of a single atom in a goal  produces a subset of the equations produced by the entire goal.}
\begin{lemma} \label{lemma:cosem-properties}
If $\cosem{S}{\progcoax{\prog}{\cofacts}}{G}{E_1}{E_2}$ holds, then:
\begin{enumerate}
\item  \FD{if $S \subseteq S'$} then $\cosem{S'}{\progcoax{\prog}{\cofacts}}{G}{E_1}{E_2}$ holds  
\item if $E'_1 \subseteq E_1$  then $\cosem{S}{\progcoax{\prog}{\cofacts}}{G}{E'_1}{E'_2}$ holds \FD{for some} $E'_2 \subseteq E_2$
\item if $E\cup E_2$ is solvable then $\cosem{S}{\progcoax{\prog}{\cofacts}}{G}{E_1 \cup E}{E_2 \cup E}$ holds
\item if $G = G_1, A, G_2$ then $\cosem{S}{\progcoax{\prog}{\cofacts}}{A}{E_1}{E}$ holds \FD{for some}  $E \subseteq E_2$
\end{enumerate}
\end{lemma}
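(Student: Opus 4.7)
The plan is to prove each of the four statements by straightforward induction on the rules defining the judgment $\cosem{S}{\progcoax{\prog}{\cofacts}}{G}{E_1}{E_2}$, namely \rn{co-empty}, \rn{co-step}, and \rn{co-hyp}. In each case the base rule \rn{co-empty} is trivial (take the same derivation with the new parameters), so the interest lies in the two inductive rules; in rule \rn{co-hyp} we additionally invoke \refToLemma{sem-properties} and \refToCorollary{ind-eq-substitution} to handle the call to standard SLD resolution.

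For point (1), I would observe that rules \rn{co-empty} and \rn{co-step} do not constrain $S$, so the inductive hypotheses (applied with the enlarged set $S' \cup \{p(s_1,\ldots,s_n)\}$ in the premise of \rn{co-step}) directly yield a derivation over $S'$. For rule \rn{co-hyp}, if $p(t_1,\ldots,t_n) \in S$ then $p(t_1,\ldots,t_n) \in S'$, so the same instance of the rule is applicable; the SLD premise is independent of $S$, and the second premise follows from the inductive hypothesis. For point (2), replacing $E_1$ by $E'_1 \subseteq E_1$ preserves solvability of the unification set $E'_1 \cup \UnifyEq{A}{A'}$ (a subset of a solvable system is solvable), so the premises of \rn{co-step} and \rn{co-hyp} still apply after shrinking; the inductive hypothesis delivers a smaller output, and for the SLD call in \rn{co-hyp} we use \refToLemma{sem-properties}(1). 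For point (3), the hypothesis that $E_2 \cup E$ is solvable ensures that every intermediate output remains solvable when unioned with $E$ (since outputs grow monotonically along the derivation), so both the solvability side condition and the inductive hypotheses of \rn{co-step}/\rn{co-hyp} can be applied with $E$ added throughout; the standard-SLD premise is handled by \refToLemma{sem-properties}(2).

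Point (4) is the one that requires the most care, so I would treat it last. I would proceed by induction on the derivation of $\cosem{S}{\progcoax{\prog}{\cofacts}}{G_1, A, G_2}{E_1}{E_2}$, distinguishing according to which atom is selected by the applied rule \rn{co-step} or \rn{co-hyp}. If the selected atom lies inside $G_1$ or $G_2$, then the remaining goal still contains $A$ as a subsequence, and we obtain the thesis by the inductive hypothesis on the premise that resolves the rest of the goal, observing that the intermediate set of equations is a subset of $E_2$ and using point (2) to carry $E_1$ as input. If the selected atom \emph{is} $A$ itself, then the derivation has a premise that evaluates the body of the chosen clause (respectively, the SLD resolution of $A$ in $\Extended{\prog}{\cofacts}$), whose output $E'$ satisfies $E' \subseteq E_2$; this gives a derivation of $\cosem{S}{\progcoax{\prog}{\cofacts}}{A}{E_1}{E'}$ by a single application of \rn{co-step} (or of \rn{co-hyp}, using the same SLD premise and rule \rn{co-empty} to close the remaining empty goal).

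The main obstacle will be the equation bookkeeping in point (4): the rules \rn{co-step} and \rn{co-hyp} interleave the evaluation of the selected atom and the rest of the goal, so I need to be careful to extract exactly the subset of equations produced by resolving $A$ alone and verify that it is contained in the final $E_2$. Points (1)–(3) are essentially mechanical inductions; the only subtlety there is the reuse of \refToLemma{sem-properties} to transport the equation-monotonicity properties from standard SLD resolution into the co-inductive judgment through rule \rn{co-hyp}.
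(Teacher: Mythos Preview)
Your proposal is correct and takes exactly the same approach as the paper, which simply states ``Straightforward induction on inference rules'' without further detail. Your expansion of the four inductive arguments is sound; the only minor remark is that you do not actually need \refToCorollary{ind-eq-substitution} anywhere (only \refToLemma{sem-properties}(1) and (2) are used for the SLD premise in \rn{co-hyp}), and in point~(4) you implicitly rely on the easy fact that outputs extend inputs ($E_1 \subseteq E_2$), which the paper also uses without stating separately.
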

\begin{proof}
{Straightforward} induction on inference rules.
\end{proof}

As a consequence we get the following result.
\begin{corollary} \label{cor:eq-substitution}
If $\cosem{S}{\progcoax{\prog}{\cofacts}}{G}{E_1 \cup E}{E_2}$ holds and $E_2 \cup E'$ is solvable, then $\cosem{S}{\progcoax{\prog}{\cofacts}}{G}{E_1 \cup E'}{E''}$ holds, with $E' \subseteq E'' \subseteq E_2 \cup E'$.
\end{corollary}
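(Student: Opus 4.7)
The plan is to derive this corollary directly from \refToLemma{cosem-properties}, specifically points (2) and (3), by first stripping the extra equations $E$ from the input and then adding $E'$ in their place.

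First, I would apply \refToLemma{cosem-properties}(2) to the hypothesis $\cosem{S}{\progcoax{\prog}{\cofacts}}{G}{E_1 \cup E}{E_2}$, using the inclusion $E_1 \subseteq E_1 \cup E$. This yields a derivation $\cosem{S}{\progcoax{\prog}{\cofacts}}{G}{E_1}{E^*}$ for some $E^* \subseteq E_2$. This step removes the irrelevant input equations while retaining control over the output.

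Next, I would apply \refToLemma{cosem-properties}(3) to this new derivation, adding the equations $E'$ on both sides. The precondition of point (3) requires that $E^* \cup E'$ be solvable, which follows from the hypothesis that $E_2 \cup E'$ is solvable together with the inclusion $E^* \subseteq E_2$ (any solution of the larger set is a solution of the smaller one). The conclusion is $\cosem{S}{\progcoax{\prog}{\cofacts}}{G}{E_1 \cup E'}{E^* \cup E'}$.

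Finally, I would set $E'' = E^* \cup E'$ and verify the required inclusions: $E' \subseteq E^* \cup E' = E''$ is trivial, and $E'' = E^* \cup E' \subseteq E_2 \cup E'$ follows from $E^* \subseteq E_2$. There is no real obstacle here: the whole argument is a two-line composition of the two monotonicity properties already established in the lemma, and the only subtle point is the small solvability check needed before applying point (3).
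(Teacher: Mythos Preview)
Your proof is correct and matches the paper's approach exactly: the paper simply says ``From point 2 and 3 of \refToLemma{cosem-properties},'' and you have spelled out precisely those two steps (strip $E$ via point~2, then add $E'$ via point~3) together with the small solvability check.
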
 
\begin{proof}
From point 2 and 3 of \refToLemma{cosem-properties}.
\end{proof}

\begin{lemma} \label{lemma:atom-substitution}
If $\cosem{S \cup \{A\}}{\progcoax{\prog}{\cofacts}}{G}{E_1}{E_2}$ holds, and $E_2 \cup \UnifyEq{A}{A'}$ is solvable, then 
$\cosem{S \cup \{A'\}}{\progcoax{\prog}{\cofacts}}{G}{E_1}{E}$ holds and 
$\Gsol{E_2 \cup \UnifyEq{A}{A'}} \subseteq \Gsol{E}$.
\end{lemma}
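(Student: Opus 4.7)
The plan is to prove the statement by structural induction on the derivation of $\cosem{S \cup \{A\}}{\progcoax{\prog}{\cofacts}}{G}{E_1}{E_2}$, case-splitting on the last rule applied. The base case \rn{co-empty} is immediate: $G = \emptygoal$ and $E_2 = E_1$, so re-applying \rn{co-empty} with $S \cup \{A'\}$ yields the first conclusion, and the $\Gsol$-inclusion follows from antitonicity of $\Gsol$ under set inclusion.

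For \rn{co-step}, the selected atom is resolved against a clause of $\prog$. The two premises (resolution of the clause body and of the remaining goal) are sub-derivations to which the induction hypothesis applies, since adding the freshly selected atom to the coinductive hypotheses of the body premise does not disturb $A$; intermediate equation sets are threaded using \refToLemma{cosem-properties}(2)(3) in the standard way. For \rn{co-hyp} we further split on whether the selected atom $C = p(s_1,\ldots,s_n)$ unifies with some $B \in S$ distinct from $A$ (in which case we keep $B$ and apply the IH only to the continuation), or with $A$ itself.

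The latter subcase is where the real work lives. Writing $E_\ast$ for the SLD intermediate set and $E_{\ast\ast}$ for the continuation output (so $E_2 = E_{\ast\ast}$), we first observe that $E_1 \cup \UnifyEq{C}{A} \subseteq E_\ast \subseteq E_{\ast\ast}$ by the input-preservation property of \rn{step} and of the co-rules (a straightforward induction from \refToLemma{sem-properties} and \refToLemma{cosem-properties}); hence any ground solution of $E_{\ast\ast} \cup \UnifyEq{A}{A'}$ satisfies $\UnifyEq{C}{A'}$ by transitivity of equality, giving solvability of $E_1 \cup \UnifyEq{C}{A'}$. Next, \refToCorollary{ind-eq-substitution} transports the embedded SLD sub-derivation from input $E_1 \cup \UnifyEq{C}{A}$ to input $E_1 \cup \UnifyEq{C}{A'}$, with new output $E_\ast' \subseteq E_\ast \cup \UnifyEq{C}{A'}$. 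Then the induction hypothesis applied to the continuation swaps $A$ for $A'$ and gives a judgment with input $E_\ast$ and some output $E'$ satisfying $\Gsol{E_{\ast\ast} \cup \UnifyEq{A}{A'}} \subseteq \Gsol{E'}$; using \refToLemma{cosem-properties}(3) to absorb $\UnifyEq{C}{A'}$ into input and output, and then \refToLemma{cosem-properties}(2) to shrink the input down to $E_\ast'$, we obtain a continuation judgment with input $E_\ast'$ and some output $E$ with $E \subseteq E' \cup \UnifyEq{C}{A'}$. Re-applying \rn{co-hyp} then produces the required derivation with $S \cup \{A'\}$ and output $E$.

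The hardest part is precisely the equation-set bookkeeping in this last subcase, since the new SLD intermediate $E_\ast'$ no longer coincides with $E_\ast$: the approach is to carry throughout the invariant that every ground solution of $E_{\ast\ast} \cup \UnifyEq{A}{A'}$ still satisfies every equation generated along the substituted derivation. Combined with antitonicity of $\Gsol$, this invariant delivers the required $\Gsol{E_2 \cup \UnifyEq{A}{A'}} \subseteq \Gsol{E}$ at the root, completing the proof.
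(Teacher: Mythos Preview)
Your proposal is correct and follows essentially the same approach as the paper's proof: induction on the derivation, with the \rn{co-hyp} case split on whether the selected atom unifies with a hypothesis in $S$ or with $A$ itself, and in the latter subcase using \refToCorollary{ind-eq-substitution} to transport the SLD premise from $\UnifyEq{C}{A}$ to $\UnifyEq{C}{A'}$, the induction hypothesis on the continuation, and the monotonicity properties of \refToLemma{cosem-properties} (which the paper packages as \refToCorollary{eq-substitution}) to realign the equation sets before re-applying \rn{co-hyp}. Your explicit mention of the input-preservation invariant $E_1 \subseteq E_2$ and the transitivity argument for solvability of $E_1 \cup \UnifyEq{C}{A'}$ are left implicit in the paper but are exactly what is needed.
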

The proof can be found in the Appendix.

\begin{lemma} \label{lemma:join-equations}
{If $\sigma_1 \in \Gsol{E_1}$, $\sigma_2 \in \Gsol{E_2}$, and $\Agree{\sigma_1}{\sigma_2}$, then $\sigma_1 \cup \sigma_2 \in \Gsol{E_1 \cup E_2}$.}
\end{lemma}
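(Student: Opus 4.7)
The plan is a direct verification from the definition of $\Gsol{\cdot}$. Recall that a substitution $\theta \in \Gsol{E}$ is ground, has domain exactly $\Var{E}$, and satisfies $\ApplySubst{t}{\theta} = \ApplySubst{t'}{\theta}$ for every equation $t = t' \in E$. So the proof breaks into three small checks: that $\sigma_1 \cup \sigma_2$ is a well-defined map, that its domain equals $\Var{E_1 \cup E_2}$, and that it satisfies every equation in $E_1 \cup E_2$.

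First I would observe that, thanks to the assumption $\Agree{\sigma_1}{\sigma_2}$, the union $\sigma_1 \cup \sigma_2$ is indeed a well-defined substitution: on any shared variable $X \in \dom(\sigma_1) \cap \dom(\sigma_2)$, the two substitutions return the same ground term by the definition of $\Agree{\cdot}{\cdot}$. Its domain is $\dom(\sigma_1) \cup \dom(\sigma_2) = \Var{E_1} \cup \Var{E_2} = \Var{E_1 \cup E_2}$, as required, and it is still ground because both $\sigma_1$ and $\sigma_2$ are.

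For the equational check, pick any $t = t' \in E_1 \cup E_2$; by symmetry assume $t = t' \in E_1$. Then $\Var{t} \cup \Var{t'} \subseteq \Var{E_1} = \dom(\sigma_1)$, so the additional bindings coming from $\sigma_2$ play no role when applying the union to $t$ or $t'$. Hence $\ApplySubst{t}{(\sigma_1\cup\sigma_2)} = \ApplySubst{t}{\sigma_1}$ and $\ApplySubst{t'}{(\sigma_1\cup\sigma_2)} = \ApplySubst{t'}{\sigma_1}$, and these are equal because $\sigma_1 \in \Gsol{E_1}$. The analogous argument for $t = t' \in E_2$ completes the verification.

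There is no real obstacle here: the lemma is essentially a bookkeeping statement, with all the content packaged into the compatibility hypothesis $\Agree{\sigma_1}{\sigma_2}$. The only point meriting a tiny bit of care is making explicit that the domain convention for $\Gsol{\cdot}$ (elements have domain exactly $\Var{E}$) is what lets us reduce $\ApplySubst{t}{(\sigma_1 \cup \sigma_2)}$ to $\ApplySubst{t}{\sigma_i}$ when $t$ comes from $E_i$.
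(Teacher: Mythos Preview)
Your proposal is correct and is precisely the direct verification that the paper has in mind: the paper's own proof consists of the single word ``Trivial.'' You have simply unpacked that triviality by checking well-definedness, the domain condition, and the equational condition from the definition of $\Gsol{\cdot}$.
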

\begin{proof}
Trivial.
\end{proof}

\begin{theorem}[Consistency]\label{theo:consistency} 
If $\cosemMain{\progcoax{\prog}{\cofacts}}{A}{E}$ holds and $\theta \in \Gsol{E}$, then \FD{there exist atoms $B_1, \ldots, B_n$ such that $\clause{\ApplySubst{A}{\theta}}{\ApplySubst{B_1}{\theta}, \ldots, \ApplySubst{B_n}{\theta}} \in \coGround{\prog}$  and }$\cosemMain{\progcoax{\prog}{\cofacts}}{B_1,\ldots,B_n}{E'}$ holds with $\Agree{E}{E'}$.
\end{theorem}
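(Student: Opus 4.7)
The plan is to invert the derivation, extract the clause instance, and then transform the sub-derivation of the body into a genuine main derivation. By rule (main), the hypothesis $\cosemMain{\progcoax{\prog}{\cofacts}}{A}{E}$ unfolds to $\cosem{\emptyset}{\progcoax{\prog}{\cofacts}}{A}{\emptyset}{E}$. Because $S=\emptyset$ rules out (co-hyp), the derivation must conclude with (co-step) applied with $G_1=G_2=\emptygoal$: there is a fresh renaming $\clause{A'}{B_1,\dots,B_n}$ of a clause of $\prog$ with $\UnifyEq{A}{A'}$ solvable, and a premise derivation of
\[
\cosem{\{A\}}{\progcoax{\prog}{\cofacts}}{B_1,\dots,B_n}{\UnifyEq{A}{A'}}{E}.
\]
These $B_1,\dots,B_n$ are the witnesses. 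Since $\UnifyEq{A}{A'}\subseteq E$ and $\theta\in\Gsol{E}$, we get $\ApplySubst{A}{\theta}=\ApplySubst{A'}{\theta}$, hence $\clause{\ApplySubst{A}{\theta}}{\ApplySubst{B_1}{\theta},\dots,\ApplySubst{B_n}{\theta}}\in\coGround{\prog}$, which takes care of the first half of the conclusion.

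What remains, and is the main obstacle, is to produce a main derivation $\cosemMain{\progcoax{\prog}{\cofacts}}{B_1,\dots,B_n}{E'}$ with $\Agree{E}{E'}$, that is, $\cosem{\emptyset}{\progcoax{\prog}{\cofacts}}{B_1,\dots,B_n}{\emptyset}{E'}$ starting from \emph{no} coinductive hypotheses and \emph{no} initial equations. Weakening the initial equations from $\UnifyEq{A}{A'}$ to $\emptyset$ is immediate from \refToLemma{cosem-properties}(2), which at the same time produces an output $E_0\subseteq E$. The real difficulty is removing $A$ from the set of coinductive hypotheses: any (co-hyp) step in the old derivation that matched $A$ would have no justification in a derivation that starts with $S=\emptyset$, since (co-hyp) requires its selected atom to be in $S$.

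To overcome this, I would establish an auxiliary lemma permitting the ``unfolding'' of a coinductive hypothesis through its own clause. Concretely: given $\cosem{S\cup\{A\}}{\progcoax{\prog}{\cofacts}}{G}{E_1}{E_2}$ together with the clause $\clause{A'}{B_1,\dots,B_n}$ obtained in the inversion, one constructs, by induction on the structure of the given derivation, a derivation $\cosem{S}{\progcoax{\prog}{\cofacts}}{G}{E_1}{E_3}$ in which every occurrence of (co-hyp) that matched $A$ is replaced by a (co-step) that uses a fresh renaming of the same clause. That (co-step) adds the matched atom to the coinductive hypotheses, so the (possibly infinite) chain of later references to $A$ in the original derivation can be replayed in the new one via (co-hyp) at this deeper level. \refToLemma{atom-substitution} is the key tool to handle the fact that each fresh renaming introduces a different variant of $A$ into the hypotheses; \refToCorollary{eq-substitution} accounts for the resulting extra unification equations. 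Applying this lemma with $S=\emptyset$ and $G=B_1,\dots,B_n$, then composing with \refToLemma{cosem-properties}(2), yields a derivation with some output $E'$, and tracking the equations through the construction gives $\Gsol{E}\subseteq\Gsol{E_0}\subseteq\Gsol{E'}$ on the common variables, which is exactly $\Agree{E}{E'}$.
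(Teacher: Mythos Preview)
Your proposal follows essentially the same route as the paper: invert \rn{main} and \rn{co-step} to expose the clause and the sub-derivation $\cosem{\{A\}}{\progcoax{\prog}{\cofacts}}{B_1,\dots,B_n}{\UnifyEq{A}{A'}}{E}$, then remove $A$ from the hypotheses by an induction that replaces each \rn{co-hyp} matching $A$ by a \rn{co-step} on a fresh renaming of the same clause, and finally strip the initial equations via \refToLemma{cosem-properties}(2). The paper carries out exactly this two-step program (first remove $A$, then remove $\UnifyEq{A}{A'}$), and uses the same tools you name: \refToLemma{atom-substitution}, \refToCorollary{eq-substitution}, and \refToLemma{cosem-properties}.

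One point deserves sharpening. Your auxiliary lemma, as stated (input: a derivation of $\cosem{S\cup\{A\}}{\progcoax{\prog}{\cofacts}}{G}{E_1}{E_2}$ plus the clause), does not carry enough data: when a \rn{co-hyp} on some $B$ unifying with $A$ is replaced by \rn{co-step}, you must supply a derivation of the \emph{body} $B'_1,\dots,B'_n$ under hypotheses $S\cup\{B\}$, and nothing in your stated lemma provides one. The paper resolves this by not formulating a free-standing lemma at all: the induction is over the sub-judgments of the \emph{fixed} top-level derivation, and in the \rn{co-hyp} case the required body premise is obtained by taking that very top-level derivation (freshly renamed), enlarging its hypothesis set from $\{A\}$ to $S\cup\{A\}$ via \refToLemma{cosem-properties}(1), adjusting the equations via \refToCorollary{eq-substitution}, and then swapping $A$ for $B$ via \refToLemma{atom-substitution}. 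Your remark about ``replaying via \rn{co-hyp} at the deeper level'' is then exactly right: inside that plugged-in copy, the old \rn{co-hyp} steps now legitimately match $B\in S\cup\{B\}$. Also note that all derivations here are finite (the operational judgments are defined inductively), so ``possibly infinite chain'' is a slip; the point is rather that one level of unfolding suffices because the inner copy keeps $B$ as a hypothesis.
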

\EZComm{questa prova probabilmente sar\`a da mettere in appendice}
\begin{proof}
Since by hypothesis $\cosemMain{\progcoax{\prog}{\cofacts}}{A}{E}$ holds, we have necessarily applied rule \rn{main}, hence
$\cosem{\emptyset}{\progcoax{\prog}{\cofacts}}{A}{\emptyset}{E}$ holds. \FD{We have also necessarily applied} rule \rn{co-step}, hence $\cosem{\{A\}}{\progcoax{\prog}{\cofacts}}{B_1, \ldots, B_n}{\UnifyEq{A}{A'}}{E}$ holds, 
for some {fresh} renamed clause $\clause{A'}{B_1, \ldots, B_n}$ in $\prog$ such that $\UnifyEq{A}{A'}$ is solvable. 
\FD{Therefore for all $\theta \in \Gsol{E}$ we have $\clause{\ApplySubst{A'}{\theta}}{\ApplySubst{B_1}{\theta}, \ldots, \ApplySubst{B_n}{\theta}} \in \coGround{\prog}$, and since $\UnifyEq{A}{A'} \subseteq E$, that implies $\ApplySubst{A}{\theta} = \ApplySubst{A'}{\theta}$, we get $\clause{\ApplySubst{A}{\theta}}{\ApplySubst{B_1}{\theta}, \ldots, \ApplySubst{B_n}{\theta}} \in \coGround{\prog}$. }\\
Now we have to show that $\cosem{\emptyset}{\progcoax{\prog}{\cofacts}}{B_1, \ldots, B_n}{\emptyset}{E'}$ holds for some $E'$ such that $\Agree{E}{E'}$.
We prove this statement in two steps.
\begin{enumerate}
\item First we prove that $\cosem{\emptyset}{\progcoax{\prog}{\cofacts}}{B_1, \ldots, B_n}{\UnifyEq{A}{A'}}{E''}$ for some $E''$ with $\Agree{E}{E''}$ 
\item Then we remove the equations in $\UnifyEq{A}{A'}$.
\end{enumerate}
The first part can be proved \FD{by showing that for each judgement $\cosem{S\cup\{A\}}{\progcoax{\prog}{\cofacts}}{G}{E_1}{E_2}$ in the derivation of $\cosem{\{A\}}{\progcoax{\prog}{\cofacts}}{B_1, \ldots, B_n}{\UnifyEq{A}{A'}}{E}$, also the judgement $\cosem{S}{\progcoax{\prog}{\cofacts}}{G}{E_1}{E'_2}$ holds for some $E'_2$ such that $\Agree{E}{E'_2}$. 
This statement can be proved by induction on the derivation as follows. } 
\begin{description}
\item [\rn{co-empty}] Trivial.
\item [\rn{co-step}] The rule has the following shape:
\begin{footnotesize}
\[
\Rule{co-step}
{
\begin{array}{c}
	\cosem{S \cup \{A\} \cup \{B\}}{\progcoax{\prog}{\cofacts}}{C_1, \ldots, C_m}{E_1 \cup \UnifyEq{B}{C}}{E_2}\\
	\cosem{S\cup\{A\}}{\progcoax{\prog}{\cofacts}}{G_1, G_2}{E_2}{E_3}
\end{array}
}
{ \cosem{S \cup \{A\}}{\progcoax{\prog}{\cofacts}}{G_1, B, G_2}{E_1}{E_3} }
{ \begin{array}{l}
	\clause{C}{C_1, \ldots, C_m} \mbox{ in } \prog\\
	\Var{C, C_1, \ldots, C_m} \mbox{ {fresh}}\\
	E_1 \cup \UnifyEq{B}{C} \mbox{ solvable}
\end{array} }
\]
\end{footnotesize}
By inductive hypothesis we have $\cosem{S\cup \{B\}}{\progcoax{\prog}{\cofacts}}{C_1, \ldots, C_m}{E_1\cup \UnifyEq{B}{C}}{E'_2}$ and $\cosem{S}{\progcoax{\prog}{\cofacts}}{G_1, G_2}{E_2}{E'_3}$, with $\Agree{E}{E'_2}$ and $\Agree{E}{E'_3}$.
Consider $\theta \in \Gsol{E}$, since $E_2 \subseteq E_3 \subseteq E$, we get $\Gsol{E} \subseteq \Gsol{E_3} \subseteq \Gsol{E_2}$, therefore $\theta \in \Gsol{E_2}$ that is, $\Var{E_2} \subseteq \dom(\theta)$;
moreover there exist $\sigma_1 \in \Gsol{E'_2}$ and $\sigma_2 \in \Gsol{E'_3}$ such that $\Agree{\theta}{\sigma_1}$ and $\Agree{\theta}{\sigma_2}$, because $\Agree{E}{E'_2}$ and $\Agree{E}{E'_3}$.
Since all introduced variables are fresh, we have that $\Var{E'_2} \cap \Var{E'_3} \subseteq \Var{E}$, that is, $\dom(\sigma_1) \cap \dom(\sigma_2) \subseteq \Var{E} = \dom(\theta)$, 
thus, from what we know above we get that for each $X \in \dom(\sigma_1) \cap \dom(\sigma_2)$, $\ApplySubst{X}{\sigma_1} = \ApplySubst{X}{\theta}=\ApplySubst{X}{\sigma_2}$, that is, $\Agree{\sigma_1}{\sigma_2}$ and $\Agree{\theta}{\sigma_1 \cup \sigma_2}$.
Therefore by \refToLemma{join-equations} we get that $\sigma_1 \cup \sigma_2 \in \Gsol{E'_2 \cup E'_3}$, that is, $E'_2\cup E'_3$ is solvable and $\Agree{E}{E'_2 \cup E'_3}$.\\
Therefore by \refToCorollary{eq-substitution} we get $\cosem{S}{\progcoax{\prog}{\cofacts}}{G_1, G_2}{E'_2}{E'}$ with $\Gsol{E'_2\cup E'_3} \subseteq \Gsol{E'}$, thus $\Agree{E}{E'}$ and by rule \rn{co-step} we get the thesis.  

\item [\rn{co-hyp}] We consider the case where the rule has the following shape, if $B$ unifies with an atom in $S$ the thesis follows from inductive hypothesis by applying rule \rn{co-hyp}.
\[
\Rule{co-hyp}
{
	\sem{\Extended{\prog}{\cofacts}}{B}{E_1 \cup \UnifyEq{A}{B}}{E_2}
	\Space
	\cosem{S\cup\{A\}}{\progcoax{\prog}{\cofacts}}{G_1, G_2}{E_2}{E_3}
}
{ \cosem{S \cup \{A\}}{\progcoax{\prog}{\cofacts}}{G_1, B, G_2}{E_1}{E_3} }
{ E_1 \cup \UnifyEq{A}{B} \mbox{ solvable} }
\] 
By hypothesis $\cosem{\{A\}}{\progcoax{\prog}{\cofacts}}{B_1, \ldots, B_n}{\UnifyEq{A}{A'}}{E}$ holds with $\clause{A'}{B_1, \ldots, B_n}$ a fresh renaming of a clause in $\prog$. 
Let $\rho$ be a renaming of variables in $\Var{E}$   which maps variables in $\Var{A}$ in themselves, and other variables to fresh variables,
then $\cosem{\{A\}}{\progcoax{\prog}{\cofacts}}{B'_1, \ldots, B'_n}{\UnifyEq{A}{\ApplySubst{A'}{\rho}}}{\ApplySubst{E}{\rho}}$ holds with $B'_i = \ApplySubst{B_i}{\rho}$ 
and $\theta \in \Gsol{E}$ iff $\CompSubst{\Inv{\rho}}{\theta} \in \Gsol{\ApplySubst{E}{\rho}}$ and $\Agree{\theta}{\CompSubst{\Inv{\rho}}{\theta}}$.
Now we have to show that $\ApplySubst{E}{\rho} \cup E_1 \cup \UnifyEq{B}{\ApplySubst{A'}{\rho}}$ is solvable; 
to this aim we consider a substitution $\theta \in \Gsol{E}$, 
and note that, since $E_1 \subseteq E$, $\Gsol{E} \subseteq \Gsol{E_1}$, \FD{this implies }$\theta \in \Gsol{E_1}$,
therefore by \refToLemma{join-equations}, $\theta \cup \CompSubst{\Inv{\rho}}{\theta} \in \Gsol{\ApplySubst{E}{\rho} \cup E_1}$. 
Moreover, since $\UnifyEq{B}{A} \subseteq E$ and $\UnifyEq{A}{\ApplySubst{A'}{\rho}} \subseteq \ApplySubst{E}{\rho}$, we have 
$\ApplySubst{B}{\theta} = \ApplySubst{A}{\theta} = \ApplySubst{A}{(\CompSubst{\Inv{\rho}}{\theta})} = \ApplySubst{\ApplySubst{A'}{\rho}}{(\CompSubst{\Inv{\rho}}{\theta})}$;
therefore, setting  $\sigma = \theta \cup \CompSubst{\Inv{\rho}}{\theta}$ and $E' = \ApplySubst{E}{\rho} \cup E_1 \cup \UnifyEq{B}{\ApplySubst{A'}{\rho}}$, we get that $\sigma  \in \Gsol{E'}$.
Thus by \refToLemma{cosem-properties}(1), \refToCorollary{eq-substitution} and \refToLemma{atom-substitution} we get  $\cosem{S \cup \{B\}}{\progcoax{\prog}{\cofacts}}{B'_1, \ldots, B'_n}{E_1\cup \UnifyEq{B}{\ApplySubst{A'}{\rho}}}{E'_2}$ holds,
with  $\Gsol{E'} \subseteq \Gsol{E'_2}$. \\
By inductive hypothesis we get $\cosem{S}{\progcoax{\prog}{\cofacts}}{G_1, G_2}{E_2}{E'_3}$ with $\Agree{E}{E'_3}$.
\FD{If $\theta \in \Gsol{E}$ there exist $\sigma_1 \in \Gsol{E'_2}$ and $\sigma_2 \in \Gsol{E'_3}$ such that $\Agree{\theta}{\sigma_1}$ and $\Agree{\theta}{\sigma_2}$ 
and, since all introduced variables are fresh,  $\dom(\sigma_1) \cap \dom(\sigma_2) \subseteq \Var{E} = \dom(\theta)$. \EZ{This implies that,} for all $X \in \dom(\sigma_1) \cap \dom(\sigma_2)$, $\ApplySubst{X}{\sigma_1} = \ApplySubst{X}{\theta} = \ApplySubst{X}{\sigma_2}$, that is, $\Agree{\sigma_1}{\sigma_2}$.
Therefore by \refToLemma{join-equations} we get $\sigma_1 \cup \sigma_2 \in \Gsol{E'_2 \cup E'_3}$ and $\Agree{\theta}{\sigma_1 \cup \sigma_2}$. }\\
Finally, by \refToCorollary{eq-substitution} we get that $\cosem{S}{\progcoax{\prog}{\cofacts}}{G_1, G_2}{E'_2}{E''}$ with $\Gsol{E'_2 \cup E'_3} \subseteq \Gsol{E''}$, so the thesis follows by application of rule \rn{co-step}.
\end{description}

At this point we have proved that $\cosem{\emptyset}{\progcoax{\prog}{\cofacts}}{B_1, \ldots, B_n}{\UnifyEq{A}{A'}}{E'}$ holds with $\Agree{E}{E'}$.
Now applying \refToLemma{cosem-properties}(2) we get that $\cosem{\emptyset}{\progcoax{\prog}{\cofacts}}{B_1, \ldots, B_n}{\emptyset}{E''}$ with \FD{$E'' \subseteq E'$. Thus $\Gsol{E'} \subseteq \Gsol{E''}$, hence $\Agree{E}{E''}$. }
Therefore by rule \rn{main} we get $\cosemMain{\progcoax{\prog}{\cofacts}}{B_1, \ldots, B_n}{E''}$.
\end{proof}

\section{Implementation}\label{sect:impl}
We have implemented a prototype meta-interpreter in SWI-Prolog, driven by
the rules of the big-step operational semantics defined in \refToSection{big-step};
the complete source code can be found in \refToApp{code}, and it is publicly available
on the Web\footnote{At \url{http://www.disi.unige.it/person/AnconaD/Software}.}
together with a unit \EZ{test-suite}. 

Tests include the predicates on lists considered in the examples provided in \refToSection{cofacts}, 
but also other predicates on lists, as well as predicates defined on repeating decimals, grammars, graphs,
and infinite regular trees \cite{AnconaEtAl17}. 

The three main predicates \lstinline{solve/1}, \lstinline{solve_gfp/2}, and \lstinline{solve_lfp/2} implement
the semantic judgments $\cosemMain{\progcoax{\prog}{\cofacts}}{G}{E}$, $\cosem{S}{\progcoax{\prog}{\cofacts}}{G}{E}{E'}$,
and $\sem{\prog}{G}{E}{E'}$, respectively. As usual, sets of equations do not need to be explicitly manipulated by the meta-interpreter,
which relies on the built-in unification mechanism offered by the Prolog interpreter, therefore the above mentioned predicates 
take less arguments than their corresponding judgments. Despite this fact,  \lstinline{solve_gfp}, and \lstinline{solve_lfp} have the same arity,
because  \lstinline{solve_lfp} needs an extra argument (in comparison to $\sem{\prog}{G}{E}{E'}$) to guarantee termination for some
queries (see more explanations below).

Differently from the operational semantics, the meta-interpreter is fully deterministic and follows the standard selection rules
for goal atoms (leftmost) and clauses (topmost/leftmost, as supported by the system predicate \lstinline{clause}).

Despite the meta-interpreter is driven by the semantic rules, there is no one-to-one correspondence between the rules and the 
meta-interpreter clauses, for several reasons explained below.

The meta-interpreter allows a correct management of predefined predicates which can be identified thanks to the SWI-Prolog system predicates;
for instance, the examples shown in \refToSection{cofacts} uses the built-in predicates \lstinline{>}, \lstinline{\=}, and \lstinline{max}.
To \EZ{handle such predicates}, we have added a specific clause which does not have a counterpart in the operational semantics: if an atom uses a predefined and necessarily inductive (either built-in, or defined in the standard library) predicate, then the meta-interpreter delegates its resolution to the Prolog interpreter. 

Since the meta-programming facilities manage goals as non-empty sequences of atoms (the empty goal is represented by
the single atom \lstinline{true}), there are no clauses corresponding to the semantic rules \rn{empty} and \rn{co-empty}; 
rather, there are two clauses (named \lstinline{seq} and \lstinline{co-seq}) which simply decompose non-singleton goals in their
leftmost atom and rest of atoms, while the clauses dealing with singleton goals correspond to \rn{step}, \rn{co-step}, and \rn{co-hyp} rules, with the difference that in their bodies there is no atom for resolving the remaining goal, as happens for the corresponding judgment with $G_1,G_2$.  

\DA{
From the complexity results concerning coinductive programming \cite{AnconaDovier15} we know 
that determining whether a goal succeeds w.r.t. the coinductive semantics is not even semi-decidable, and, hence, the same
 applies also to our extension with co-facts which includes the standard coinductive semantics.
}

\DA{
The interpreter limits non termination in two ways:
\begin{itemize}
\item a cut is inserted right after atom \lstinline{co_find(Atom, AtomList)} in the body of the clause 
for \lstinline{solve_gfp} corresponding to rule  \rn{co-hyp}; in this way, the clause corresponding
to \rn{co-step} is never applied for an atom which unifies with an element in the list \lstinline{AtomList} of the
coinductive hypotheses (that is, rule \rn{co-hyp} is applicable); this ensures that non termination is avoided when a ``loop''
in the proof tree is detected. Of course, with the insertion of such a cut we may miss some correct answer \cite{Ancona13}.
 \item we have inserted an additional clause (called \lstinline{cut}) for the \lstinline{solve_lfp} predicate which
uses an association to map encountered atoms (modulo unification) to the number of times
they have been already processed by the meta-interpreter; clause \lstinline{step} updates such an
association, by checking whether the currently processed atom unifies with an atom already present in the association;
since \lstinline{solve_lfp} has to build a finite proof tree, no substitution is applied when unification succeeds. 
Clause \lstinline{cut}, which precedes clause \lstinline{step}, fails with no backtracking if, according to the association, the current atom 
has been already processed twice. Also in this case, we ensure termination at the cost of missing some correct answer.
\end{itemize}
}

\DA{
Clause \lstinline{cut} is also responsible for co-facts: it attempts to apply co-facts only when the corresponding atom is found in the association (but only for the first time); in this way termination is guaranteed in more cases.
Since co-facts are managed as facts with the system predicate \lstinline{clause}, it has been pretty easy to 
extend the meta-interpreter to take into account also the possibility of defining \emph{co-clauses} (see further comments in \refToSection{conclu}). 
}

\section{Conclusion}\label{sect:conclu}

We have proposed a generalized logic programming paradigm inspired by the notion of inference system with coaxioms \cite{AnconaEtAl17},
where it is possible to consider interpretations which are between the inductive and the coinductive one.
At the model-theoretic level (declarative semantics) this has been achieved by taking as semantics of a program the largest co-model of the 
program included in the least model of the program enriched by co-facts; 
at the operational level we have defined a big-step operational semantics which is a refinement of co-SLD resolution: 
when the same goal is encountered the second time, its standard SLD resolution is triggered in the program enriched by co-facts.

We have proved that such an operational semantics is sound with respect to the declarative semantics; furthermore, the big-step
semantics rules have driven the implementation of a prototype meta-interpreter that has allowed us to successfully experiment our
generalized logic programming paradigm.

Coinductive logic programming has been initially investigated and implemented by Simon et al. \cite{SimonEtAl06,SimonEtAl07};
since the earlier stages, \EZ{the problem} has been recognized that not all predicate definitions require a coinductive interpretation;
for instance, Simon et al. have pointed out this issue for the \lstinline{member} predicate, whose semantics is inductive even when
infinite lists are considered. Anyway, to our knowledge, no current implementation of coinductive logic programming supports
the extension of inductive predicates for the complete Herbrand base, neither the ability of defining programs 
where  coinductive and inductive predicates can be freely mixed together. 

Similar issues have been investigated in the context of coinductive functional \cite{JeanninEtAl12,JeanninEtAl13},
and object-oriented \cite{AnconaZucca12,AnconaZucca13} paradigms, but the proposed solutions lack proof principles
useful for proving correctness of programs written in these extended paradigms. 

We have commented in \refToSection{impl} that the prototype meta-interpreter naturally supports not only co-facts, but also
co-clauses, and the big-step semantics can be trivially extended to take into account this possibility, but it would be
interesting to investigate whether this generalization has a natural counterpart at the level of the declarative semantics. 

For what concerns the implementation, much more work is required to guarantee that the extension of logic programming
with co-facts can be effectively used in practice.
\DA{The meta-interpreter offers a simple solution to rapid prototyping 
of an implementation of the operational semantics by exploiting
the reflection facilities of Prolog, but is far from being an efficient solution.
Furthermore, the combination of the two predicates \lstinline{solve_gfp} and \lstinline{solve_lfp} for 
coinductive and inductive reasoning, respectively, has a bad impact on the performance because
in fact a proof tree needs to be built twice. It would be interesting to investigate more clever algorithms
to avoid such a duplication, or, at least, identify conditions on logic programs which are sufficient
to guarantee more efficient implementations.}

\DA{
Another direction for further work consists in studying restricted classes of logic programs
for which the operational semantics of co-facts presented here turns out to be sound and complete.
}

\bibliographystyle{eptcs}
\bibliography{main}
\appendix

\section{Proofs}\label{sect:proof}
\begin{proof}[Proof of \refToLemma{atom-substitution}]
By induction on inference rules.
\begin{description}
\item[\rn{co-empty}] We trivially conclude by rule \rn{empty}.
\item[\rn{co-step}] By hypothesis $E_3 \cup \UnifyEq{A}{A'}$ is solvable, so by inductive hypothesis we get that $\cosem{S \cup \{A'\}}{\progcoax{\prog}{\cofacts}}{G_1, G_2}{E_2}{E'_3}$ holds with $\Gsol{E_3 \cup \UnifyEq{A}{A'}} \subseteq \Gsol{E'_3}$. 
Since $E_2 \subseteq E_3$, $E_2 \cup \UnifyEq{A}{A'}$ is solvable by hypothesis, 
by inductive hypothesis we get $\cosem{S \cup \{A'\} \cup \{B'\}}{\progcoax{\prog}{\cofacts}}{B_1, \ldots B_n}{E_1 \cup \UnifyEq{B}{B'}}{E'_2}$ with $\Gsol{E_2 \cup \UnifyEq{A}{A'}} \subseteq \Gsol{E'_2}$.
Note that $\Gsol{E_3 \cup \UnifyEq{A}{A'}} \subseteq \Gsol{E_2 \cup \UnifyEq{A}{A'}} \subseteq \Gsol{E'_2}$, therefore $\Gsol{E_3 \cup \UnifyEq{A}{A'}} \subseteq \Gsol{E'_2} \cap \Gsol{E'_3}$, that implies $\Gsol{E_3 \cup \UnifyEq{A}{A'}} \subseteq \Gsol{E'_2 \cup E'_3}$.
Therefore by \refToCorollary{eq-substitution} we get that $\cosem{S \cup \{A'\}}{\progcoax{\prog}{\cofacts}}{G_1, G_2}{E'_2}{E''_3}$ holds with $E''_3 \subseteq E'_2 \cup E'_3$, thus $\Gsol{E_3 \cup \UnifyEq{A}{A'}} \subseteq \Gsol{E''_3}$.
Finally we get the thesis by applying rule \rn{co-step}.
\item[\rn{co-hyp}] If the atom $B$ unifies with an atom in $S$  the the thesis follow by the inductive hypothesis applying rule \rn{co-hyp}.\\
Suppose that $B$ unifies with $A$, that is $E_1 \cup \UnifyEq{B}{A}$ is solvable. 
By hypothesis $E_3 \cup \UnifyEq{A}{A'}$ is solvable, thus by inductive hypothesis we get $\cosem{S}{\progcoax{\prog}{\cofacts}}{G_1, G_2}{E_2}{E'_3}$ with $\Gsol{E_3 \cup \UnifyEq{A}{A'}} \subseteq \Gsol{E'_3}$.
Since $E_2 \subseteq E_3$ we know that $\Gsol{E_3\cup \UnifyEq{A}{A'}} \subseteq \Gsol{E_2 \cup \UnifyEq{A}{A'}}$;
moreover since $\UnifyEq{B}{A} \subseteq E_2$, we get that $\Gsol{E_2 \cup \UnifyEq{A}{A'} \cup UnifyEq{B}{A'}} = \Gsol{E_2 \cup \UnifyEq{A}{A'}}$, 
thus, by \refToCorollary{ind-eq-substitution}, we get that $\sem{\Extended{\prog}{\cofacts}}{B}{E_1 \cup \UnifyEq{B}{A'}}{E'_2}$ with $\Gsol{E_2 \cup \UnifyEq{A}{A'}} \subseteq \Gsol{E'_2}$.
Also $\Gsol{E_3 \cup \UnifyEq{A}{A'}} \subseteq \Gsol{E_2 \cup \UnifyEq{A}{A'}} \subseteq \Gsol{E'_2}$, that implies $\Gsol{E_3 \cup \UnifyEq{A}{A'}} \subseteq \Gsol{E'_2} \cap \Gsol{E'_3} \subseteq \Gsol{E'_2 \cup E'_3}$.
Hence by \refToCorollary{eq-substitution} we get that $\cosem{S \cup \{A'\}}{\progcoax{\prog}{\cofacts}}{G_1, G_2}{E'_2}{E''_3}$ holds with $\Gsol{E'_2 \cup E'_3} \subseteq \Gsol{E''_3}$.
Finally we get the thesis by applying rule \rn{co-hyp}.
\end{description}
\end{proof}

\section{Source code}\label{sect:code}

\begin{lstlisting}[basicstyle=\ttfamily\scriptsize,language=Prolog,breaklines=true]
:- module(meta_interpreter, [solve/1]).

:- use_module(library(assoc)). %%% needed to keep track of atom occurrences for finite failure 

solve(Goal) :- solve_gfp([], Goal). %%% (main)

%%% solver for the inductive system with cofacts

solve_lfp(AtomAssoc, (Goal1, Goal2)) :- !, solve_lfp(AtomAssoc, Goal1), solve_lfp(AtomAssoc, Goal2). %%% seq
solve_lfp(_, Atom) :- predefined(Atom), !, Atom. %%% predef
solve_lfp(AtomAssoc, Atom) :- find(Atom, AtomAssoc, Count), (Count<2 -> clause(cofact(Atom), Body), Body; !, fail).  %%% cut
solve_lfp(AtomAssoc, Atom) :- clause(Atom, Body), insert(Atom, AtomAssoc, NewAtomAssoc), solve_lfp(NewAtomAssoc, Body). %%% step

%%% solver for the coinductive system with no cofacts

solve_gfp(AtomList, (Goal1, Goal2)) :- !, solve_gfp(AtomList, Goal1), solve_gfp(AtomList, Goal2). %%% co-seq
solve_gfp(_, Atom) :- predefined(Atom), !, Atom. %%% co-predef
solve_gfp(AtomList, Atom) :- co_find(Atom, AtomList), !, empty_assoc(EmptyAssoc), solve_lfp(EmptyAssoc, Atom). %%% co-hyp
solve_gfp(AtomList, Atom) :- clause(Atom, Body), co_insert(Atom, AtomList, NewAtomList), solve_gfp(NewAtomList, Body). %%% co-step

%%% predefined predicates are interpreted in the standard way

predefined(Atom) :- predicate_property(Atom, built_in), !.
predefined(Atom) :- predicate_property(Atom, file(AbsPath)), file_name_on_path(AbsPath, library(_)), !.

%%% auxiliary predicates for the coinductive solver

co_find(Atom, AtomList) :- member(Atom, AtomList).

co_insert(Atom, AtomList, [Atom|AtomList]).  

%%% auxiliary predicates for the inductive solver

%%% finds if AtomAssoc contains an AtomKey unifiable with Atom, but does not perform unification; if so, returns its corresponding hit count
%%% used by the inductive solver

find(Atom, AtomAssoc, Count) :- retrieve(Atom, AtomAssoc, AtomKey), get_assoc(AtomKey, AtomAssoc, Count).

%%% retrieve the AtomKey in AtomAssoc which is unifiable with Atom; no unification is performed
%%% used by found

retrieve(Atom, AtomAssoc, AtomKey) :- assoc_to_keys(AtomAssoc, AtomKeyList), get(Atom, AtomKeyList, AtomKey).

%%% checks if Atom is unifiable with an atom in AtomList; if so, returns such an atom. No unification is performed
%%% used by retrieve

get(Atom, [UnifiableAtom|_], UnifiableAtom) :- unifiable(Atom, UnifiableAtom, _), !. 
get(Atom, [_|AtomList], UnifiableAtom) :- get(Atom, AtomList, UnifiableAtom). 

%%% insertion in the list of visited atoms when implementing the inductive solver
%%% keeps track of how many times an atom has been hit

insert(Atom, AtomAssoc, NewAtomAssoc) :- 
    retrieve(Atom, AtomAssoc, AtomKey) -> 
	get_assoc(AtomKey, AtomAssoc, Counter), IncCounter is Counter+1, put_assoc(AtomKey, AtomAssoc, IncCounter, NewAtomAssoc); 
        copy_term(Atom, CopiedAtom), put_assoc(CopiedAtom, AtomAssoc, 1, NewAtomAssoc).
\end{lstlisting}

\end{document}